\documentclass[runningheads]{llncs}
\usepackage[T1]{fontenc}

\usepackage{cite}

\usepackage{amsmath,amssymb,amsfonts}
\usepackage{graphicx}
\usepackage{textcomp}
\usepackage{xcolor}
\usepackage{balance}
\usepackage{url}  %Required
\usepackage{graphicx}  %Required
\usepackage{xcolor}
\usepackage{CJKutf8}
\usepackage[online]{optional}
\usepackage{url}
\usepackage{amsmath,amsfonts,amsthm,amssymb}
\usepackage{algorithm}
\usepackage[noend]{algpseudocode}
\usepackage{graphicx}
\usepackage{textcomp}
\usepackage{xcolor}
\usepackage{comment}
\usepackage{graphicx}
\usepackage[utf8]{inputenc}
\usepackage{CJK}
\usepackage{subfigure}
\usepackage{algorithm}
\usepackage{algpseudocode}
\usepackage{epsfig}
\usepackage{color}
\def\BibTeX{{\rm B\kern-.05em{\sc i\kern-.025em b}\kern-.08em
    T\kern-.1667em\lower.7ex\hbox{E}\kern-.125emX}}

\newcommand{\shenr}[1] % 'R'evisied by shen
{
\textbf{\color{red}{#1}}
}

\begin{document}
\begin{CJK}{UTF8}{bkai}

%\title{Identifying Onsite and Remote Working Groups for Minimizing the Risk of Infection in Pandemics}
%\titlerunning{Identifying Onsite and Remote Working Groups}

\title{Risk-Aware Skill-Coverage Hybrid Workforce Configuration on Social Networks}
\titlerunning{Risk-Aware Skill-Coverage Hybrid Workforce Configuration}

\author{Anonymous}
\authorrunning{Anonymous}

\author{
Hui-Ju Hung\inst{1} \and
Guang-Siang Lee\inst{2} \and
Chia-Hsun Lu\inst{3}\and
Chih-Ya Shen\inst{3}\and
De-Nian Yang\inst{2} 
}

\authorrunning{Hung et al.}

\institute{
National Central University, Taoyuan, Taiwan\\
\email{hjhung@ncu.edu.tw}
\and
Academia Sinica, Taipei, Taiwan\\
\email{gslee9822802@citi.sinica.edu.tw, dnyang@iis.sinica.edu.tw}
\and
National Tsing Hua University, Hsinchu, Taiwan\\ 
\email{chlu@m109.nthu.edu.tw, chihya@cs.nthu.edu.tw}
}

\begin{comment}
\author{Chia-Hsun Lu\inst{1} \and
Hui-Ju Hung\inst{2} \and
Guang-Siang Lee\inst{3} \and
Chih-Ya Shen\inst{2} \and
De-Nian Yang\inst{4} \and
}
%
\authorrunning{Lu et al.}

\institute{Department of Computer Science, National Tsing Hua University, Hsinchu, Taiwan.\\ 
\email{chlu@m109.nthu.edu.tw,}
\email{chihya@cs.nthu.edu.tw.}
\and
Department of Computer Science and Information Engineering, National Central University, Taoyuan, Taiwan.\\
\email{hjhung@ncu.edu.tw.}
\and
Research Center for Information Technology Innovation, Academia Sinica, Taipei, Taiwan.\\
\and
Institute of Information Science, Academia Sinica, Taipei, Taiwan.}
\end{comment}

\maketitle

% Given required skills and an upper bound on risks, the goal is to select a subset of employees to work onsite that maximizes a collaboration-based objective while satisfying skill and risk constraints.

\begin{abstract}
In hybrid workforce configurations, it is important to decide which employees should work onsite or remotely while ensuring the collaboration benefits against contact-based health risks and skill requirements. In this paper, we formulate the \textit{Risk-aware Skill-coverage Hybrid Workforce Configuration} (RSHWC) problem on a two-layer social network that balances physical contact risks and social collaboration ties to meet skill requirements.  We prove that RSHWC is NP-hard and propose the \textit{Guided Risk-aware Iterative Assembling} (\texttt{GRIA}) algorithm, a multi-stage algorithm that combines risk-aware workforce construction, skill-preserving workforce refinement, and risk-reducing member replacement. Experiments on four real-world networks show that \texttt{GRIA} consistently outperforms state-of-the-art baselines under various settings.
\end{abstract}

%\keywords{Graph query \and Onsite and remote employee arrangement \and Pandemics.}
\keywords{hybrid workforce configuration \and skill preservation \and risk reduction}

\section{Introduction}

Hybrid work, combining on-site and remote arrangements, is now widespread across various sectors. Many prominent employers, including Meta, Microsoft, Citigroup, and Standard Chartered, have adopted long-term hybrid work schedules. In these companies, most staff spend two or three days in the office and work remotely on the remaining days. The reasons go beyond health concerns: hybrid work helps reduce office space costs and support employee well-being. Empirical studies show that such arrangements can improve satisfaction and retention, reduce commuting time, and enhance work-life balance, without sacrificing performance~\cite{bloom2022hybrid,bloom2024hybrid}. %aksoy2023time
In some settings, well-designed hybrid policies reduce attrition by about 30\%~\cite{bloom2022hybrid,bloom2024hybrid}. %At the same time, seasonal influenza and other respiratory infections still cause workplace outbreaks even outside pandemic periods. Allowing symptomatic employees to work from home remains crucial for maintaining operations while minimizing contact-based health risks.

Deciding who should work onsite in a given period requires sophisticated mining and planning. Organizations must cover location-specific tasks (e.g., front-desk reception, laboratory experiments), assemble onsite workers with the right skill combinations, and maintain operational continuity~\cite{barrero2023evolution,melkonian2023optimization}. Companies may also impose bounds on onsite days per employee or constraints derived from indicators of employee well-being~\cite{bloom2024hybrid,melkonian2023optimization}. However, physical co-location simultaneously exposes employees to contact-based risks, such as seasonal infectious diseases or other in-person operational risks subject to internal or regulatory constraints~\cite{aloisi2022essential}. Therefore, hybrid workforce configuration requires sophisticated mining and planning that couples skills, collaboration structures, and contact-based risks.

Existing research typically addresses only one side of this hybrid workforce configuration problem (i.e., selecting employees to work onsite). Commercial staff-scheduling methods allocate shifts and balance the workload, but largely treat employees as interchangeable and ignore their previous collaborative social structure~\cite{abadi2021hssaga,guerriero2022modeling,melkonian2023optimization}. Team formation in social networks assembles experts with complementary skills and cohesive connectivity~\cite{shen20222,Lappa2009,shen2016,shen2022,nikolaou2024team}, but do not account for hybrid presence or contact-based risks. Epidemic-control intervention in contact social networks selects nodes to intervene (e.g., vaccinate or quarantine) to limit disease propagation~\cite{epic1,epic7}, without modeling the productive value of onsite work or detailed skill requirements. In contrast, hybrid workforce configurations require smart algorithms that carefully examine social collaborations, skills, and contact-based risks.

%To capture the structure of hybrid work, we model the organization as a two-layer social network $G = (G_c, G_p)$ defined on a common set of employees $V$. The contact network $G_c = (V, E_c)$ represents potential physical interactions, and the partnership network $G_p = (V, E_p)$ represents collaboration ties and skills. Each employee $v \in V$ is associated with a skill set, and each partnership edge $(u,v) \in E_p$ has two collaboration scores: one for when both employees are onsite and one for when at least one of them works remotely. This two-layer model allows us to analyze how any onsite roster simultaneously affects contact-based propagation on $G_c$ and collaboration on $G_p$.

On top of this two-layer social network, we formulate the \emph{Risk-aware Skill-coverage Hybrid Workforce Configuration (RSHWC)} problem with a contact social network $G_c$ (describing the risk of contact interactions) and a collaboration social network $G_p$ (describing the expertise and social collaboration atmosphere). Given a required skill set $R$ and a propagation model for contact-based risks on $G_c$, the goal of RSHWC is to select a subset of onsite employees that maximizes a collaboration objective subject to two constraints: (i) \emph{skill coverage}, in which the onsite employees cover all required skills, and (ii) \emph{group-level propagation}, in which the contact-based risk on $G_c$ does not exceed a specified limit. The contact-based risks may come from epidemic diseases such as seasonal influenza or COVID-19, incorrect rumors propagated through word-of-mouth, or other contact-based risks relevant to hybrid work.
%The RSHWC problem is combinatorial and involves tightly coupled constraints, making it computationally challenging. 
We first prove that RSHWC is NP-hard and then propose \emph{Guided Risk-aware Iterative Assembling} (\texttt{GRIA}), a scalable algorithm that jointly accounts for skill coverage, collaborative gains from onsite presence, and upper bounds on contact-based risks, by risk-aware workforce construction, skill-preserving workforce refinement, and risk-reducing member replacement. 

Our main contributions are as follows:
\begin{itemize}
\item We explore hybrid workforce configuration considering social collaborations, skills, and contact-based risks.
\item We formulate RSHWC and prove that RSHWC is NP-hard.
\item We propose \texttt{GRIA}, a scalable algorithm with theoretical guarantees on the skill-preserving workforce refinement module.
\item We conduct extensive experiments on four real-world networks, showing that \texttt{GRIA} outperforms state-of-the-art methods. 
\end{itemize}

The rest of this paper is organized as follows.
Section~\ref{sec:relatedwork} reviews the relevant literature. Section~\ref{sec:problem} formulates the \emph{RSHWC} problem. Section~\ref{sec:algo} presents \texttt{GRIA}. Section~\ref{sec:exp} evaluates its performance, and Section~\ref{sec:conclu} concludes the paper.

\section{Related Work}
\label{sec:relatedwork}

% \noindent \textbf{Onsite and remote arrangement.} 
% Hybrid work arrangements have been extensively studied~\cite{batista2024personnel,guerriero2022modeling,moosavi2022staff,abadi2021hssaga}. For example, Guerriero and Guido employ linear programming to minimize onsite headcount~\cite{guerriero2022modeling}. Other approaches explicitly incorporate disease-related constraints, such as restricting cross-room assignments~\cite{moosavi2022staff}, balancing workloads~\cite{abadi2021hssaga}, or jointly optimizing testing and staffing decisions~\cite{batista2024personnel}. Although effective in their respective settings, these methods are typically computationally intensive and difficult to scale, making them unsuitable for our problem setting.

\noindent \textbf{Team formation in social networks.} 
The team formation problem focuses on constructing a group of experts that collectively satisfy task requirements~\cite{nikolaou2024team,shen20222,shen2016,Lappa2009}. 
Lappa et al.~\cite{Lappa2009} aim to form an expert team by minimizing the diameter of the team’s spanning tree, while Shen et al.~\cite{shen2016} instead seek teams that minimize pairwise distances among selected nodes. 
Chang et al.~\cite{shen20222} leverage reinforcement learning with diameter- and skill-based embeddings to identify expert teams in social networks. 
Recently, Nikolaou et al.~\cite{nikolaou2024team} formulate team formation amid conflicts, balancing individual task preferences against pairwise conflicts. 
However, the above schemes do not consider the contact-based risks and only extract homogeneous teams.

\noindent \textbf{Dense subgraphs.} 
Research on dense subgraphs seeks to identify cohesive vertex sets or communities~\cite{shen2022,zhou2024counting,Cui2014}. 
Existing work includes discovering subgraphs that maximize average edge weight under hop constraints~\cite{shen2022}, identifying $k$-clique-based dense structures~\cite{zhou2024counting}, and using local search for community detection~\cite{Cui2014}. %xu2024efficient
These works focus on connectivity and density but do not consider the influence of nodes and edges, which are crucial to model the contact-based risks.

\noindent \textbf{Prevention of epidemics and misinformation.} 
Optimization-based methods for controlling the spread of epidemics and misinformation have been widely investigated~\cite{muppasani2024expressive,epic1,epic7}. 
Typical strategies include selecting key nodes for intervention to reduce overall infection levels~\cite{epic1}, planning sequences of actions on these nodes~\cite{muppasani2024expressive}, and designing containment operations to limit propagation~\cite{epic7}. 
However, these approaches focus on spread minimization in isolation and do not integrate hybrid workforce configurations. 
%(e.g., staffing levels, coverage, and safety requirements), which are central to our problem.

\section{The RSHWC Problem}
\label{sec:problem}

In this section, we formulate the \emph{Risk-aware Skill-coverage Hybrid Workforce Configuration  (RSHWC)} problem on a two-layer social network and prove that it is NP-hard. 
%We first introduce the underlying model and then define the optimization problem.

\noindent\textbf{Preliminaries.}
We model the workforce as a two-layer social network
$G = (G_c, G_p)$ on a common vertex set $V$ of employees.
The \emph{contact network} $G_c = (V, E_c)$ captures potential physical interactions among the employees. Each contact edge $e \in E_c$ has an influence probability $\sigma(e) \in [0,1]$ that represents the probability that a terminal is influenced by the other terminal through $e$~\cite{contact_nw}. The contact-based risk on $G_c$ can be modeled by~\cite{kempe2003maximizing,SIR} and the influence probabilities can be learned by~\cite{goyal2010learning,zhang2019learning}.
%to estimate the contact-based risk on $G_c$ started from a set of initially infected employees $S \subseteq V$. 
For any vertex subset $U \subseteq V$, we denote $risk_U(S)$ as the expected number of infected vertices in $U$ when the set of initially infected persons is $S \cap U$. Without loss of generality, in the following we use epidemic terminology, and it can be applied to other risks, such as misinformation dissemination.

The \emph{partnership network} $G_p = (V, E_p)$ represents social collaboration relationships. Each employee $v \in V$ has a skill set $\mathcal{S}(v) \subseteq \mathcal{U}$ from
a global skill universe $\mathcal{U}$. Each partnership edge $[u,v] \in E_p$ carries an
onsite collaboration score $o([u,v]) \ge 0$ and a remote collaboration score
$r([u,v]) \ge 0$, quantifying the benefit of collaboration when both endpoints
are onsite versus when at least one works remotely. Fig.~\ref{ill_ex}
illustrates a toy example of this two-layer structure.

\begin{figure}[t]
    \centering
    \subfigure[L1: \emph{Contact network}]{\label{show_gc}
        \includegraphics[width=0.34\columnwidth]{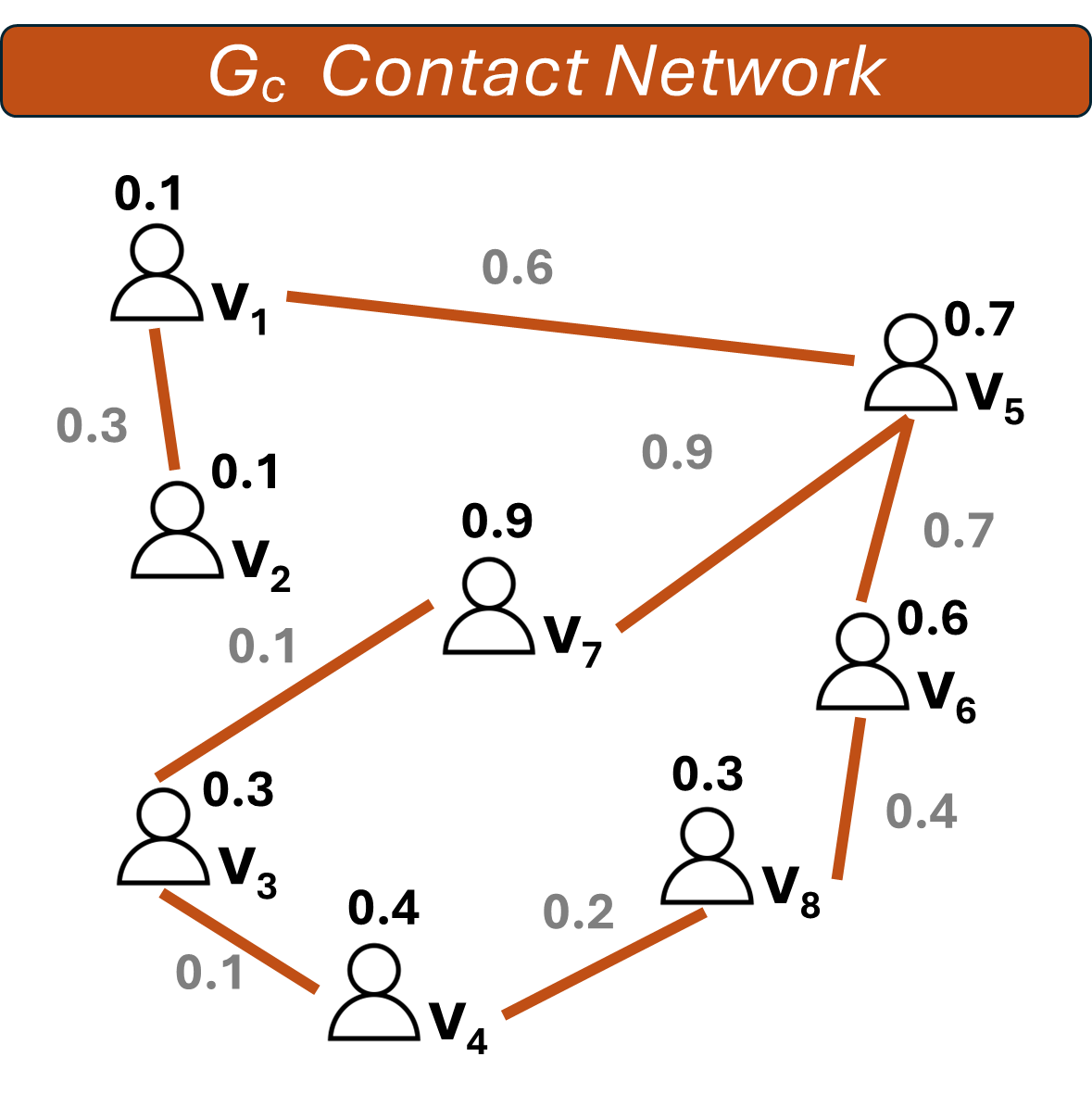}
    }
    %\hspace{10pt}
    \subfigure[L2: \emph{Partnership network}]{\label{show_ge}
        \includegraphics[width=0.35\columnwidth]{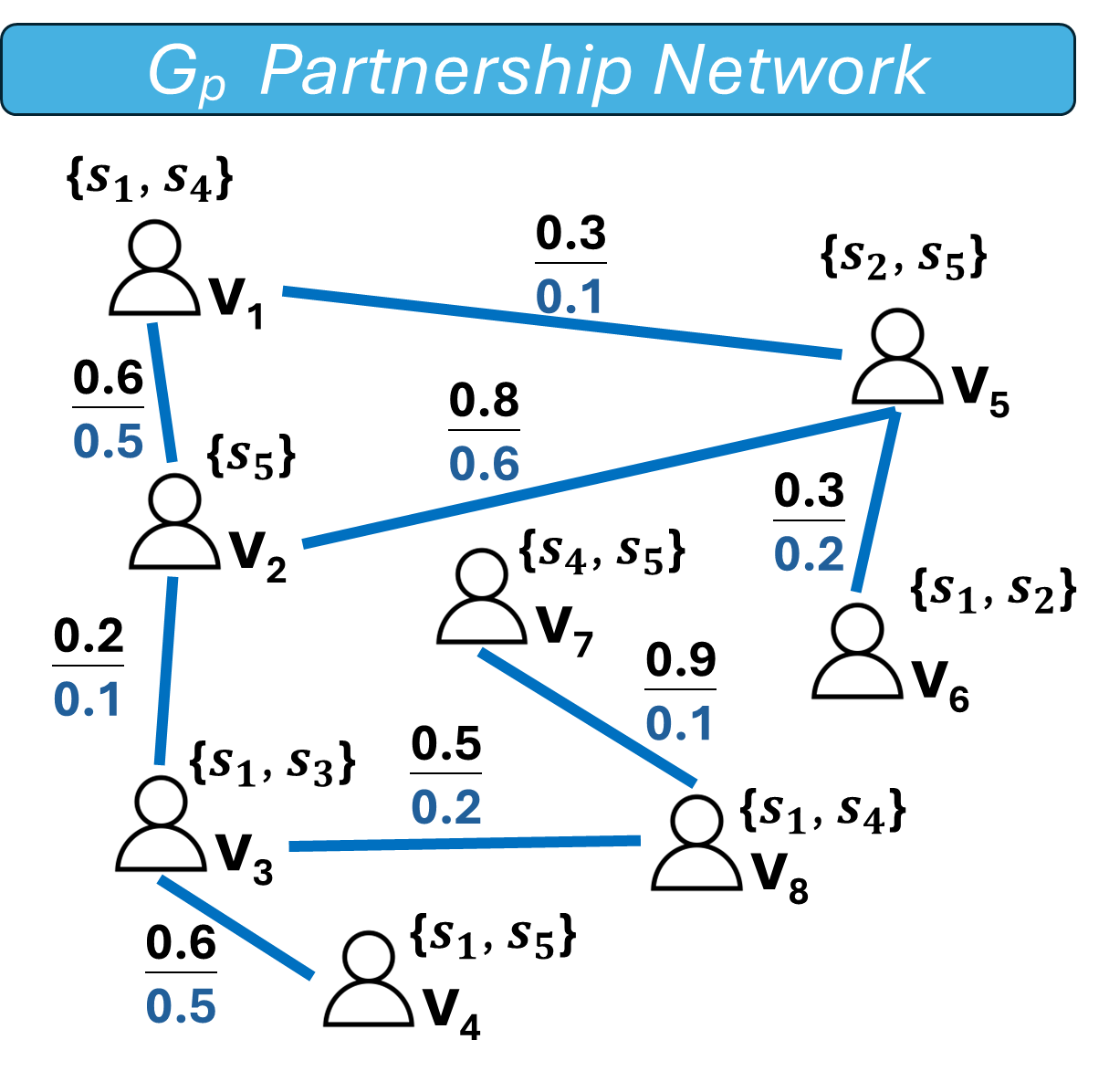}
    }
    \caption{Illustrative example of a two-layer social network}
    \label{ill_ex}
\end{figure}
%\hj{等佳勳改圖}
%\noindent\textbf{Problem formulation.}

%\emph{Collaboration objective.}
The \emph{collaboration objective} refers to the benefits of collaboration. In hybrid settings, each employee either works onsite or remotely for
a given period. For an onsite set $V^* \subseteq V$, the remaining employees
$V \setminus V^*$ work remotely.
On the partnership network $G_p$, each edge $[u,v] \in E_p$ contributes
$o([u,v])$ if both endpoints are onsite and $r([u,v])$ otherwise. Given an
onsite set of employees $V^*$, we define our objective as the \textit{average collaboration}:
\begin{align}
\small
\alpha(V^*)
    &=
    \frac{1}{|V^*|} \Big(
    \sum_{\substack{u,v \in V^* \\ [u,v] \in E_p}} o([u,v])
    +    
    \sum_{\substack{u \in V^*,\, v \notin V^* \\ [u,v] \in E_p}} r([u,v])
    +
    \sum_{\substack{u,v \notin V^* \\ [u,v] \in E_p}} r([u,v])\Big).
\label{eq:alpha}
\end{align}
% \begin{align}
% \small
% \alpha(V^*)
%     &=
%     \frac{1}{|V^*|}
%     \sum_{\substack{u,v \in V^* \\ [u,v] \in E_p}} o([u,v])
%     +
%     \frac{1}{|V^*|}
%     \sum_{\substack{u \in V^*,\, v \notin V^* \\ [u,v] \in E_p}} r([u,v])
%     +
%     \frac{1}{|V^*|} 
%     \sum_{\substack{u,v \notin V^* \\ [u,v] \in E_p}} r([u,v]).
% \label{eq:alpha}
% \end{align}
The first term aggregates onsite-onsite collaborations, while the second and
third terms aggregate collaborations with at least one remote endpoint. The
normalization by $|V^*|$ interprets $\alpha(V^*)$ as collaboration per onsite
employee.

Following~\cite{pastor2015epidemic,walters2018modelling,kempe2003maximizing,SIR,hung2016routing}, the \textit{contact-based risk} is defined as the expected number of employees to be influenced/infected when the seed set is $S$, and the influence only propagates on the induced subgraph of the contact network with only onsite employees. The expected number of influenced employees can be derived according to models like SIR, SEIR, SIRS, IC, LT~\cite{pastor2015epidemic,kempe2003maximizing,SIR}.

% For a given onsite set $V^* \subseteq V$, we run the SIR process on the induced
% contact graph $G_c[V^*]$ with seed set $S \cap V^*$. Let $\mathrm{Inf}(V^*;S)$ denote
% the number of infected onsite employees in this process, and let
% $\mathbb{E}[\mathrm{Inf}(V^*)]$ be its expectation.

\begin{definition}[The RSHWC Problem]
\label{def:RSHWC}
Let $G = (G_c, G_p)$ be a two-layer network as above. Each employee
$v \in V$ has a skill set $\mathcal{S}(v) \subseteq \mathcal{U}$, each contact edge $e \in E_c$ has
influence probability $\sigma(e)$, and each partnership edge $[u,v] \in E_p$
has onsite and remote collaboration scores $o([u,v])$ and $r([u,v])$,
respectively. Given a required skill set $\mathcal{R} \subseteq \mathcal{U}$, an initial infected set $S \subseteq V$, and an epidemic budget $C > 0$, the
\emph{Risk-aware Skill-coverage Hybrid Workforce Configuration  (RSHWC)} problem is to find an onsite set $V^* \subseteq V$ that maximizes $\alpha(V^*)$ subject to
(i) \textit{skill coverage}: $ \bigcup_{v \in V^*} \mathcal{S}(v) \supseteq \mathcal{R}$ and (2) \textit{group-level epidemic control}: $risk_{V^*}(S) \leq C$.
\end{definition}

RSHWC couples combinatorial social collaboration, skill coverage, and non-linear propagation, in which every employee is either \textit{onsite} or \textit{remote}. The next theorem proves even the decision version of RSHWC is NP-hard.

%The RSHWC problem is challenging because it couples combinatorial skill coverage, heterogeneous collaboration gains under onsite and remote modes, and non-linear propagation constraints on the contact network. We next show that even the decision version of RSHWC is NP-hard.

\begin{theorem}\label{RSHWC:NPhard}
The RSHWC problem is NP-hard.
\end{theorem}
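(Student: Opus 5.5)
We prove NP-hardness of the decision version (``is there a feasible $V^*$ with $\alpha(V^*)\ge k$?'') by a polynomial-time reduction from \textsc{Set Cover}. The decision version of Set Cover asks whether a universe $\mathcal{U}_0=\{x_1,\dots,x_n\}$ with subsets $T_1,\dots,T_m$ admits a cover using at most $K$ subsets.

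\textbf{Construction.} For each subset $T_j$ create an employee $v_j$ with skill set $\mathcal{S}(v_j)=T_j$, and set the required skills $\mathcal{R}=\mathcal{U}_0$. The propagation constraint is used to impose a cardinality bound. Make every employee an initially infected seed, $S=V$, and let the contact network have no edges, $E_c=\emptyset$ (or all $\sigma(e)=0$). Then under any of the listed models (IC, SIR, LT, \dots) we have $risk_{V^*}(S)=|S\cap V^*|=|V^*|$ exactly, because seeds are counted as infected and nothing spreads. Setting $C=K$ makes the epidemic constraint equivalent to $|V^*|\le K$. For the partnership network take $E_p=\emptyset$, or all scores zero. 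Then $\alpha(V^*)=0$ for every nonempty $V^*$, and we ask whether a feasible solution with $\alpha\ge 0$ exists. Feasibility is then exactly the Set Cover question.

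\textbf{Correctness.} If there is a cover $\{T_j\}_{j\in J}$ with $|J|\le K$, then $V^*=\{v_j:j\in J\}$ covers $\mathcal{R}$ and has risk $|J|\le C$. Conversely, any feasible $V^*$ has $|V^*|\le K$ and its skill sets cover $\mathcal{U}_0$, which yields a cover of size at most $K$. The construction is clearly polynomial. Consequently, deciding whether RSHWC has a feasible solution (and a fortiori optimizing $\alpha$) is NP-hard. This also rules out any approximation, since even feasibility is hard.

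\textbf{Main obstacle.} The delicate step is ensuring that $risk_{V^*}(S)$ is computed exactly in polynomial time and equals $|V^*|$, rather than being a \#P-hard expectation. Choosing $E_c=\emptyset$ avoids this. I would also check that the convention ``seeds count as infected'' matches the definition of expected infected vertices in $U$. If a reader objects to the trivial objective, an alternative is to give each $v_j$ a self-contained onsite gain. For example, add pendant partners with $o=1$ and $r=0$, arranged so that $\alpha$ is constant on feasible sets. The simplest choice, zero scores, suffices for NP-hardness.
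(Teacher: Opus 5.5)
Your proposal is correct and uses essentially the same reduction as the paper: subsets become employees with their elements as skills, an all-seed contact network with no transmission turns the risk budget into the cardinality bound $|V^*|\le C$, and a degenerate partnership layer reduces RSHWC to a feasibility question (the paper uses a complete $G_p$ with $o=r=1$, you use zero scores). The only real difference is that you reduce from Set Cover rather than X3C, which makes the converse direction immediate and avoids the paper's counting argument that forces $|V^*|=q$ and pairwise disjointness.
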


\begin{proof}
We prove NP-hardness via a reduction from the classical Exact Cover by 3‑Sets (\emph{X3C}) problem~\cite{X3C}.
An instance of X3C consists of a ground set $X = \{x_1, x_2, \dots, x_{3q}\}$ and a collection $\mathcal{C} = \{C_1, C_2, \dots, C_n\}$ of 3-element subsets $C_i \subseteq X$.
The question is to determine whether there exists a subcollection $\mathcal{C}_x \subseteq \mathcal{C}$ of size $|\mathcal{C}_x| = q$ such that every element of $X$ appears in exactly one set in $\mathcal{C}_x$; in this case $\mathcal{C}_x$ is called an \emph{exact cover} of $X$.

Given an X3C instance $(X,\mathcal{C})$, we construct an instance of RSHWC as follows. (1) We create one vertex $v_i$ for each subset $C_i \in \mathcal{C}$, so $V = \{v_1,\dots,v_n\}$. The skill universe is $\mathcal{U} = X$, and the required skill set is $\mathcal{R} = X$. For each vertex $v_i$, we define its skill set as $\mathcal{S}(v_i) = C_i$; that is, each skill corresponds to an element of $X$. (2)We let $G_p = (V, E_p)$ be the complete graph on $V$. For every edge $[u,v] \in E_p$, we set $o([u,v]) = 1$ and $r([u,v]) = 1$. Thus, all collaboration scores are identical, and the objective $\alpha(V^*)$ depends only on the size of $V^*$, not on which particular vertices are chosen.
(3) We let $G_c = (V, E_c)$ be the complete graph on $V$.
%For every vertex $v \in V$, we set its initial infection probability to $\iota(v) = 1$.
Consider that $S=V$, i.e., all nodes are infected initially.
For every edge $e \in E_c$, we set the propagation probability to $\sigma(e) = 0$.
% Under this parameterization, there is no secondary transmission: infections do not spread along edges, and the expected number of infected individuals in a subset $V^*$ under the SIR model equals
% \[
%     \mathbb{E}[\mathrm{Inf}(V^*)] = \sum_{v \in V^*} \iota(v) = |V^*|.
% \]
%We set the infection threshold to $p = 1$, so the individual infection constraint $\iota(v) \le p$ is always satisfied.
Also, we set the epidemic limit to $C = q$.
Suppose the X3C instance admits an exact cover $\mathcal{C}_x = \{C_{i_1},\dots,C_{i_q}\}$. Consider the subset $V^* = \{v_{i_1},\dots,v_{i_q}\}$.
By construction, $\bigcup_{v \in V^*} \mathcal{S}(v) = \bigcup_{j=1}^q C_{i_j} = X = \mathcal{R}$, so the skill coverage constraint is satisfied.
%Each selected vertex has $\iota(v) = 1 \le p$, so the individual infection constraint holds.
%$\mathbb{E}[\mathrm{Inf}(V^*)] = |V^*| = q = C$
Moreover, $risk_{V^*}(S) = |V^*| = q = C$, so the epidemic limit is also satisfied. Thus, $V^*$ is a feasible solution to the constructed RSHWC instance.

Conversely, suppose there exists a feasible onsite set $V^* \subseteq V$ for the constructed RSHWC instance. Since each vertex covers at most three skills, and $\mathcal{R}$ contains $3q$ skills, the skill coverage constraint implies $3\,|V^*| \;\ge\; \bigl|\bigcup_{v \in V^*} \mathcal{S}(v)\bigr| \;\ge\; |\mathcal{R}| = 3q$, so $|V^*| \ge q$.
On the other hand, by our choice of parameters, and the group-level epidemic control constraint $risk_{V^*}(S) =|V^*| \le C = q$ yields $|V^*| \le q$. Hence $|V^*| = q$.
Furthermore, we have
$\bigl|\bigcup_{v \in V^*} \mathcal{S}(v)\bigr| = |\mathcal{R}| = 3q$, 
and $V^*$ contains exactly $q$ vertices, each with $|\mathcal{S}(v)| = 3$.
If two selected skill sets $\mathcal{S}(v_i)$ and $\mathcal{S}(v_j)$ overlapped on some element $x \in X$, the size of the union would be strictly smaller than $3q$.
Therefore, the sets $\{\mathcal{S}(v): v \in V^*\}$ are pairwise disjoint and together cover $X$, so they form an exact cover of $X$.
Mapping each vertex $v_i \in V^*$ back to the corresponding subset $C_i$ yields an exact cover for the original X3C instance.
We have thus given a polynomial-time reduction from X3C to RSHWC. Since X3C is NP-complete, even the decision version of RSHWC is NP-hard. The theorem follows.
\end{proof}

\begin{algorithm}[t]
  \scriptsize
  \caption{Guided Risk-aware Iterative Assembling (\texttt{GRIA})}
  \label{alg:GRIA}
  \begin{algorithmic}[1]
    \Require A two-layer network $G = (G_c, G_p)$, skill set $\mathcal{S}(v)$, $\forall v \in V$, influence probability $\sigma(e)$, $\forall e \in E_c$, onsite collaboration scores $o(e')$ and remote collaboration scores $r(e')$, $\forall e' \in E_p$, required skills set $\mathcal{R}$, an initial infected set $S \in V$, and an epidemic budget $C$ 
    \Ensure A subset $V^* \subseteq V$ 
    \Statex \#Risk-aware workforce construction (RWC)
    \State $V^* \gets \emptyset$, $V_{cand} \gets V$, and $\mathcal{R}_{miss} \gets \mathcal{R}$
    \While{$\mathcal{R}_{miss} \neq \varnothing$}
        \State $V_{feas} \gets \{v \in V_{cand} | \mathcal{S}(v) \cap \mathcal{R}_{miss} \neq \varnothing \land risk_{V^*}(S) \leq C\}$
        \State $v_{best} \gets \arg\max_{v \in V_{feas}} \frac{\tau_{V^*}(v)}{\mathrm{risk}_{V^*}(\{v\})}$
        \State $V^* \gets V^* \cup \{v_{best}\}$
        \State $V_{cand} \gets V_{cand} \setminus \{v_{best}\}$, $\mathcal{R}_{miss} \gets \mathcal{R}_{miss} \setminus \mathcal{S}(v_{best})$
    \EndWhile
    \While{$risk_{V^*}(S) \leq C$}
        \State $v_{best} \gets \arg\max_{v \in V_{cand}} \frac{\tau_{V^*}(v)}{\mathrm{risk}(v)}$
        \State $V^* \gets V^* \cup \{v_{best}\}$, and $V_{cand} \gets V_{cand} \setminus \{v_{best}\}$
    \EndWhile
    \Statex \#Skill-preserving workforce refinement (SWR)
    \For{each vertex $v \in V^*$}
        \If{$\tau_{V^*}(v) < \alpha(V^*)$ and $\bigcup_{u\in V^*\setminus\{v\}} \mathcal{S}(u) \supseteq \mathcal{R}$}
            \State $V^* \gets V^* \setminus \{v\}$
        \EndIf
    \EndFor
    \Statex \#Risk-reducing member replacement (RMR)
    \State $change \gets \textbf{true}$ 
    \While{$change$}
        \State $change \gets \textbf{false}$ 
        
        \State $v_o \gets \arg\max_{v \in V^*} \mathrm{risk_{V^*}}(\{v\})$
        \State $v_r \gets \arg\min_{v \in V\setminus V^*} \mathrm{risk_{V^*\cup\{v\}}}(\{v\})$
        \State $V' \gets V^*\setminus\{v_o\} \cup \{v_r\}$
        \If{$\bigcup_{u\in V'} \mathcal{S}(u) \supseteq \mathcal{R}$ and $\alpha(V') \geq \alpha(V^*)$ and $risk_{V'}(S) \leq C$}
            \State $V^* \gets V'$, $change \gets \textbf{true}$
        \EndIf
    \EndWhile\\
    \Return $V^*$
\end{algorithmic}
\end{algorithm}

\section{The Proposed \texttt{GRIA} Algorithm}
\label{sec:algo}

%Guided Risk-aware Iterative Assembling (GRIA) constructs an onsite set$V^* \subseteq V$ in three phases: Initial onsite team construction, local exchange for risk reduction, and skill-preserving workforce refinement. \hj{這裡描述一些concept level of each component}
\textit{Guided Risk-aware Iterative Assembling} (\texttt{GRIA}) constructs an onsite set $V^* \subseteq V$ in three conceptually distinct phases. The first phase performs a risk-aware workforce construction that incrementally builds an onsite team covering all required skills while favoring employees that bring large collaboration gains per unit of contact risk.  The second phase performs skill-preserving workforce refinement, removing onsite employees whose marginal contribution to collaboration is below the current average, which yields a while onsite team while preserving feasibility.
The final phase involves risk-reducing member replacement to replace high-risk onsite employees with safer alternatives without sacrificing skill coverage or collaboration. The pseudo cod of \texttt{GRIA} is presented in Algorithm~\ref{alg:GRIA}.

\noindent\textbf{Risk-aware workforce construction (RWC).}
\texttt{GRIA} starts from an empty onsite set $V^* = \emptyset$ and a set of candidate employees $V_{\text{cand}} = V$ to be selected to work onsite. Let $\mathcal{R}_{\text{miss}}$ denote the set of skills in $\mathcal{R}$ that are not yet covered by $V^*$. \texttt{GRIA} repeatedly chooses one employee from $V_{\text{cand}}$ and adds it to $V^*$.
While $\mathcal{R}_{\text{miss}}$ is non-empty, only employees who help cover at least one missing skill are considered, i.e., vertices $v$ with $\mathcal{S}(v) \cap \mathcal{R}_{\text{miss}} \neq \emptyset$ while the infection risk is bounded by $C$. Among the candidates who can cover more than one skill yet covered, \texttt{GRIA} selects the employee maximizing
$\frac{\tau_{V^*}(v)}{\mathrm{risk}(v)}$, favoring candidates that bring large additional onsite collaboration with the current team per unit of contact risk. 
After adding $v$ to $V^*$, \texttt{GRIA} removes the newly covered skills and the selected employee from $\mathcal{R}_{\text{miss}}$ and $V_{\text{cand}}$, respectively.
After all required skills are covered (i.e., $\mathcal{R}_{\text{miss}} = \emptyset$), \texttt{GRIA} selects the employee maximizing
$\frac{\tau_{V^*}(v)}{\mathrm{risk}(v)}$
continue to add employees as long as the propagation budget allows, and the collaboration benefits are positive. For any vertex subset $U \subseteq V$ and vertex $v \in V$, we define the \emph{collaboration gain} of $v$ with respect to $U$ as
$\tau_U(v) = \sum_{u \in U \cap N_p(v)} \bigl(o([v,u]) - r([v,u])\bigr)$,  where $N_p(v)$ is the neighbor set of $v$ in the partnership network. Intuitively, $\tau_U(v)$ measures the increase in collaboration obtained by bringing $v$ onsite when vertices in $U$ are potential onsite partners.
On the contact network, we define the risk function $risk_{V^*}(S)$ as the number of infected people when the infection is propagated through the contact network $G_c$, which can be quickly computed by existing methods~\cite{zhou2015upper,tang2015influence}.
Risk-aware workforce construction terminates when no remaining employee can be added
without violating the group-level epidemic constraint or bringing positive collaboration gain.

\noindent\textbf{Skill-preserving workforce refinement (SWR).}
After an initial onsite team is built, $V^*$ removes onsite employees whose collaboration contribution is strictly below the current average. Let $\alpha(V^*)$ be the average collaboration score of the current onsite set. For each $v \in V^*$, \texttt{GRIA} computes its marginal gain $\tau_{V^*}(v)$, checks whether $ \tau_{V^*}(v) < \alpha(V^*)$ and removing $v$ preserves skill coverage, $\bigcup_{u \in V^* \setminus \{v\}} \mathcal{S}(u) \supseteq \mathcal{R}$.
Whenever both conditions hold, $v$ is pruned from $V^*$. Because removing vertices can only reduce the infection risk. The correctness of the theoretical proof is detailed in Section~\ref{subsec:pruning}.
%, which shows that such skill-preserving workforce refinement never decreases the average collaboration score of any superset of the remaining vertices. 

\noindent\textbf{Risk-reducing member replacement (RMR).}
In the initial onsite team construction phase, \texttt{GRIA} may still place high-risk employees onsite if they offer
large initial gains or have critical skills. To reduce the infection phase, \texttt{GRIA} performs a local exchange phase between onsite and offsite employees. More specifically, in each iteration, \texttt{GRIA} identifies
(i) the onsite employee $v_o$ with the largest $\mathrm{risk_{V^*}}(\{v\})$ and
(ii) the remote employee $v_r=V \setminus V^*$ in $V_{\text{rem}}$ with the smallest
$\mathrm{risk_{V^*}}(\{v\})$ and try to exchange $v_o$ and $v_r$.  The exchange is accepted only if the solution $V' = V^* \setminus \{v_o\} \cup \{v_r\}$ satisfies the following three conditions: (i) the skill coverage is preserved,
$ \bigcup_{u \in V'} \mathcal{S}(u) \supseteq \mathcal{R}$;  (ii) the collaboration objective does not decrease $ \alpha(V') \ge \alpha(V^*)$,
and (iii) the group epidemic constraint remains satisfied,
$ risk_{V'}(S) \le C$. 
\texttt{GRIA} terminates after $t_e$ trials, where $t_e$ is a given parameter.

\noindent\textbf{Complexity analysis.}  
Initial onsite team construction runs for at most $O(|V|)$ iterations because at most $|V|$ vertices can be added into $|V^*|$. In each iteration, \texttt{GRIA} scans all remaining candidates and calculates the infection risk to choose the node to be added into $|V^*|$, requiring $O(|V|\cdot c_r)$ time, where $c_r$ is the running time for calculating the infection risk. Thus, this stage has complexity $O(|V|^2 \cdot c_r)$. Then, the skill-preserving workforce refinement scans all vertices in $V^*$ and performs $O(1)$ work per vertex, yielding an $O(|V|)$ cost. Since we only remove the nodes, the infection risk only reduces, and we do not need to recalculate it. The local exchange for risk reduction repeats $t_e$ times, and each time \texttt{GRIA} examines $|V|$ vertices to identify $v_o$ and $v_r$ to be exchanged, leading to $O(t_e \cdot |V| \cdot c_r)$ time. Overall, the total complexity of \texttt{GRIA} is $O(|V|^2\cdot c_r + |V| + t_e|V| \cdot c_r) = O(|V|^2 \cdot c_r)$, where $c_r$ is $O(|E|)$ according to~\cite{zhou2015upper}.

\subsection{Correctness Proof of Skill-preserving Workforce Refinement}
\label{subsec:pruning}

We now prove the correctness of the skill-preserving workforce refinement in \texttt{GRIA}.
Given a vertex subset $V^*$, let $c$ denote its average collaboration score, i.e., $\alpha(V^*) = c$.
Let $C' = \{u \in V^* \mid \tau_{V^*}(u) < c\}$ be the set of vertices whose collaboration gain is strictly below the current average.

\begin{lemma}\label{neighbor_prune1}
Let $C' = \{u \mid \tau_{V^*}(u) < c,\, u \in V^*\}$ and $Q \subset V^*$ with $\alpha(Q) \ge c$.
For every subset $\overline{C} \subset C'$, we have $\alpha(Q \cup \overline{C}) < \alpha(Q)$. \opt{short}{(Proof in~\cite{reprod})}
\end{lemma}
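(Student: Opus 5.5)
The plan is a direct algebraic argument. First rewrite $\alpha$ so that only the onsite--onsite edges vary with the onsite set. Let $R_{tot}=\sum_{[u,v]\in E_p} r([u,v])$, which is a constant independent of the onsite set. For any $U\subseteq V$, every edge contributes $r$ unless both endpoints lie in $U$, in which case it contributes $o$. Hence
\begin{align*}
|U|\,\alpha(U) = R_{tot} + W(U), \qquad W(U)=\sum_{\substack{u,v\in U\\ [u,v]\in E_p}} \bigl(o([u,v])-r([u,v])\bigr).
\end{align*}
I read the statement with two conditions made explicit: $\overline{C}$ is nonempty, and $\overline{C}$ is disjoint from $Q$. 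If $\overline{C}\subseteq Q$, the union equals $Q$ and strict inequality is impossible. So I take $\overline{C}\subseteq C'\setminus Q$, $\overline{C}\neq\emptyset$.

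Second, bound $W(Q\cup\overline{C})$ in terms of $W(Q)$ and the gains $\tau$. Split the edges inside $Q\cup\overline{C}$ into those inside $Q$, those between $Q$ and $\overline{C}$, and those inside $\overline{C}$. This gives
\begin{align*}
W(Q\cup\overline{C}) = W(Q) + \sum_{v\in\overline{C}}\tau_Q(v) + W(\overline{C}) \le W(Q) + \sum_{v\in\overline{C}}\tau_{Q\cup\overline{C}}(v).
\end{align*}
In the last sum each edge inside $\overline{C}$ is counted twice instead of once. The inequality therefore holds provided those edge weights $o-r$ are nonnegative.

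Next use $Q\cup\overline{C}\subseteq V^*$ and monotonicity of $\tau$ in its subscript, which again needs $o-r\ge 0$, to get $\tau_{Q\cup\overline{C}}(v)\le\tau_{V^*}(v)<c\le\alpha(Q)$ for every $v\in\overline{C}$. Summing over the nonempty set $\overline{C}$, the added terms total strictly less than $|\overline{C}|\,\alpha(Q)$. Then
\begin{align*}
|Q\cup\overline{C}|\,\alpha(Q\cup\overline{C}) = R_{tot}+W(Q\cup\overline{C}) < |Q|\,\alpha(Q) + |\overline{C}|\,\alpha(Q),
\end{align*}
and dividing by $|Q|+|\overline{C}|$ yields $\alpha(Q\cup\overline{C})<\alpha(Q)$.

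The main obstacle is the sign of $o([u,v])-r([u,v])$. Both the double-counting step and the monotonicity $\tau_{Q\cup\overline{C}}\le\tau_{V^*}$ rely on these edge weights being nonnegative. The paper's interpretation of $\tau$ as the collaboration gain of bringing $v$ onsite suggests this is intended, namely that onsite collaboration is at least remote collaboration. I would state $o\ge r$ as a standing assumption.

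If negative weights must be allowed, I would first try to replace the double-counting bound with the exact identity $W(\overline{C})=\tfrac12\sum_{v\in\overline{C}}\tau_{\overline{C}}(v)$. That identity does not rescue monotonicity in $V^*$, however. In that case I expect the lemma to need the hypothesis $o\ge r$ or a redefinition of $C'$ via $\tau_Q$ rather than $\tau_{V^*}$.
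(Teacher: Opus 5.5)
Your proof is correct and is essentially the paper's argument. The paper inducts on $|\overline{C}|$, adding one vertex $u$ at a time. It uses the exact identity $(|U|+1)\,\alpha(U\cup\{u\}) = |U|\,\alpha(U) + \tau_{U\cup\{u\}}(u)$ together with the bound $\tau_{U\cup\{u\}}(u)\le\tau_{V^*}(u)<c\le\alpha(Q)$, which is the unrolled, double-count-free version of your one-shot summation. The paper's step $\tau_{Q\cup\{u\}}(u)\le\tau_{V^*}(u)$ silently relies on the same hypothesis $o\ge r$ that you flag. Its base case and denominators likewise presuppose that $\overline{C}$ is nonempty and disjoint from $Q$, as you made explicit.
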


\opt{online}{
\begin{proof}
We prove the lemma by induction on $|\overline{C}|$.

\noindent\textbf{Base case.}
Consider $\overline{C} = \{u\}$ with $u \in C'$.
Then
\[
\small
    \alpha(Q \cup \{u\})
    = \frac{\alpha(Q)\cdot|Q| + \tau_{Q \cup \{u\}}(u)}{|Q| + 1}.
\]
Since $\tau_{Q \cup \{u\}}(u) \le \tau_{V^*}(u) < c$, we obtain
\[
\small
    \alpha(Q \cup \{u\})
    < \frac{\alpha(Q)\cdot|Q| + c}{|Q| + 1}.
\]
Because $\alpha(Q) \ge c$, it follows that
\[
\small
    \alpha(Q \cup \{u\}) - \alpha(Q)
    < \frac{\alpha(Q)\cdot|Q| + c}{|Q|+1}
      - \frac{\alpha(Q)\cdot(|Q|+1)}{|Q|+1}
    = \frac{c - \alpha(Q)}{|Q|+1} \le 0,
\]
so $\alpha(Q \cup \{u\}) < \alpha(Q)$.

\noindent\textbf{Inductive step.}
Assume the claim holds for any subset of $C'$ with size $k$.
Let $\overline{C}_k \subset C'$ with $|\overline{C}_k| = k$ and consider $\overline{C}_{k+1} = \overline{C}_k \cup \{u\}$ with $u \in C' \setminus \overline{C}_k$.
By the induction hypothesis, $\alpha(Q \cup \overline{C}_k) < \alpha(Q)$.
Then
\[
\small
    \alpha(Q \cup \overline{C}_{k+1})
    = \frac{\alpha(Q \cup \overline{C}_k)\cdot(|Q| + k)
      + \tau_{Q \cup \overline{C}_{k+1}}(u)}{|Q| + k + 1}
    < \frac{\alpha(Q)\cdot(|Q| + k) + c}{|Q| + k + 1},
\]
and hence
\[
\small
    \alpha(Q \cup \overline{C}_{k+1}) - \alpha(Q)
    < \frac{\alpha(Q)\cdot(|Q| + k) + c}{|Q| + k + 1}
      - \frac{\alpha(Q)\cdot(|Q| + k + 1)}{|Q| + k + 1}
    = \frac{c - \alpha(Q)}{|Q| + k + 1} \le 0.
\]
Thus $\alpha(Q \cup \overline{C}_{k+1}) < \alpha(Q)$, completing the induction.
\end{proof}
}

The next lemma shows that adding vertices in $C'$ cannot raise the average above $c$ when starting from a subset with average below $c$.

\begin{lemma}\label{neighbor_prune2}
Let $C' = \{u \mid \tau_{V^*}(u) < c,\, u \in V^*\}$ and $Q \subseteq V^*$ with $\alpha(Q) < c$.
For any subset $\overline{C} \subseteq C'$, we have $\alpha(Q \cup \overline{C}) < c$. \opt{short}{(Proof in~\cite{reprod})}
\end{lemma}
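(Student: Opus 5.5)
We argue by induction on $|\overline{C}|$, in the same way as the proof of Lemma~\ref{neighbor_prune1}. The key tool is the incremental identity used there. For $u \notin W$,
\[
\small
\alpha(W \cup \{u\}) = \frac{\alpha(W)\cdot|W| + \tau_{W\cup\{u\}}(u)}{|W|+1}.
\]
We combine it with the monotonicity bound $\tau_{W\cup\{u\}}(u) \le \tau_{V^*}(u) < c$, which holds whenever $W \cup \{u\} \subseteq V^*$ and $u \in C'$. This bound is the same one invoked in the base case of Lemma~\ref{neighbor_prune1}.

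If $\overline{C} \subseteq Q$ the claim is trivial, so we may discard elements of $\overline{C}$ already in $Q$. That is, we replace $\overline{C}$ by $\overline{C}\setminus Q$, which leaves $Q\cup\overline{C}$ unchanged, and then add the remaining vertices one at a time.

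\textbf{Base case.} For $\overline{C} = \emptyset$ the claim is the hypothesis $\alpha(Q) < c$. For a single vertex $u \in C'\setminus Q$, the identity with $W = Q$ gives a numerator bounded strictly by $\alpha(Q)|Q| + c < c|Q| + c$. Dividing by $|Q|+1$ yields $\alpha(Q\cup\{u\}) < c$.

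\textbf{Inductive step.} Suppose $\alpha(Q \cup \overline{C}_k) < c$ for some $\overline{C}_k \subseteq C'$ with $k$ new vertices, and let $u \in C' \setminus (Q\cup\overline{C}_k)$. Apply the identity with $W = Q\cup\overline{C}_k$, which is contained in $V^*$. The numerator is then strictly less than $c\,|W| + c$, so $\alpha(W\cup\{u\}) < c$. In effect, both the current average and the new vertex's contribution lie strictly below $c$, so their weighted mean does too.

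\textbf{Main obstacle.} The delicate point is justifying the incremental identity for $\alpha$ as defined in \eqref{eq:alpha}. Moving $u$ onsite changes each edge $[u,v]$ with $v \in W$ from $r$ to $o$; this accounts for the term $\tau_W(u)$, which equals $\tau_{W\cup\{u\}}(u)$. However, the remote-only terms involving the rest of the graph stay in the numerator, so the total collaboration is not just a sum over onsite vertices. I would follow the paper's convention in Lemma~\ref{neighbor_prune1} and treat $\alpha(W)|W|$ as the accumulated onsite sum to which $\tau$ is added. I would also note that $\tau_{W\cup\{u\}}(u) \le \tau_{V^*}(u)$ requires the terms $o-r$ to be nonnegative, the same implicit assumption used in Lemma~\ref{neighbor_prune1}. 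I would state both of these assumptions explicitly rather than re-derive them.
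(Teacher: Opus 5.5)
Your proof is correct and is essentially the paper's own induction on $|\overline{C}|$, using $\alpha(W\cup\{u\}) = \bigl(\alpha(W)|W| + \tau_{W\cup\{u\}}(u)\bigr)/(|W|+1)$ and $\tau_{W\cup\{u\}}(u)\le\tau_{V^*}(u)<c$. Discarding $\overline{C}\cap Q$ and flagging $o\ge r$ are genuine improvements: the paper uses $o\ge r$ silently, and the lemma is false without it. For instance, take $V=V^*=\{q,u,b_1,b_2\}$ and $E_p=\{[q,u],[u,b_1],[b_1,b_2]\}$ with $(o,r)=(2.2,0),(0,1),(3.8,0)$ respectively; then $c=1.5$, $\tau_{V^*}(u)=1.2$ and $\alpha(\{q\})=1$, yet $\alpha(\{q,u\})=1.6$.

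The incremental identity, by contrast, is not a convention you need to assume; it holds exactly. Here $\alpha(W)|W|$ is the entire bracketed numerator in \eqref{eq:alpha}, not just an onsite sum, and the remote-only terms appear identically in $\alpha(W)|W|$ and in $\alpha(W\cup\{u\})(|W|+1)$. The only edges whose contribution changes are those from $u$ into $W$, each switching from $r$ to $o$, which is exactly $\tau_{W}(u)$.
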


\opt{online}{
\begin{proof}
Again we use induction on $|\overline{C}|$.

\noindent\textbf{Base case.}
For $\overline{C} = \{u\}$ with $u \in C'$, we have
\[
\small
    \alpha(Q \cup \{u\})
    = \frac{\alpha(Q)\cdot|Q| + \tau_{Q \cup \{u\}}(u)}{|Q| + 1}
    < \frac{\alpha(Q)\cdot|Q| + c}{|Q| + 1}.
\]
Since $\alpha(Q) < c$, it follows that
\[
\small
    \alpha(Q \cup \{u\})
    < \frac{c\cdot|Q| + c}{|Q| + 1} = c.
\]

\noindent\textbf{Inductive step.}
Assume the claim holds for any $k$ vertices added from $C'$.
Let $\overline{C}_k \subset C'$ with $|\overline{C}_k| = k$, and let $\overline{C}_{k+1} = \overline{C}_k \cup \{u\}$ with $u \in C' \setminus \overline{C}_k$.
Using $\alpha(Q \cup \overline{C}_k) < c$ by induction, we obtain
\[
\small
    \alpha(Q \cup \overline{C}_{k+1})
    = \frac{\alpha(Q \cup \overline{C}_k)\cdot(|Q| + k)
      + \tau_{Q \cup \overline{C}_{k+1}}(u)}{|Q| + k + 1}
    < \frac{c\cdot(|Q| + k) + c}{|Q| + k + 1} = c.
\]
Thus $\alpha(Q \cup \overline{C}) < c$ for any $\overline{C} \subseteq C'$, completing the proof.
\end{proof}
}

\begin{theorem}[Skill-preserving workforce refinement]\label{IVP}
Given a solution $V^*$ with average collaboration score $\alpha(V^*) = c$, any vertex $u \in V^*$ such that $\tau_{V^*}(u) < c$ can be safely removed without decreasing the average collaboration score of any superset of the remaining vertices. \opt{short}{(Proof in~\cite{reprod})}
\end{theorem}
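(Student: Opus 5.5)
The plan is to derive the theorem directly from Lemma~\ref{neighbor_prune1} and Lemma~\ref{neighbor_prune2}. Fix $V^*$ with $\alpha(V^*)=c$, the set $C'$ as defined above, and any $u\in C'$. I read ``safely removed'' as follows: for any set $Q\subseteq V^*\setminus\{u\}$ (a candidate solution built from the remaining vertices), adding back vertices of $C'$, in particular $u$, cannot give a solution better than what is achievable without them. Concretely, I will show that for every $Q\subseteq V^*$ with $Q\cap C'=\emptyset$ and every nonempty $\overline{C}\subseteq C'$, either $\alpha(Q\cup\overline{C})<\alpha(Q)$, or both values lie strictly below $c=\alpha(V^*)$, which is already attained by the current solution. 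Skill feasibility is unaffected, because SWR removes $u$ only when $\bigcup_{w\in V^*\setminus\{u\}}\mathcal{S}(w)\supseteq\mathcal{R}$. Risk feasibility is also unaffected, since shrinking the onsite set can only lower $risk_{V^*}(S)$: the propagation runs on an induced subgraph, which is monotone under vertex deletion.

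The argument splits into two cases on $\alpha(Q)$.

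\textbf{Case $\alpha(Q)\ge c$.} Lemma~\ref{neighbor_prune1} gives $\alpha(Q\cup\overline{C})<\alpha(Q)$, so including any vertices of $C'$, and $u$ in particular, strictly hurts. Dropping $u$ therefore never lowers the best achievable average among supersets of the remaining vertices.

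\textbf{Case $\alpha(Q)<c$.} Lemma~\ref{neighbor_prune2} gives $\alpha(Q\cup\overline{C})<c=\alpha(V^*)$. Any configuration that re-includes $u$ is therefore dominated by an already-known value, and removing $u$ loses nothing relative to the optimum over these sets.

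Two technical points need care. First, both lemmas require $\tau_{Q\cup\overline{C}}(u)\le\tau_{V^*}(u)$. This holds only if the summands $o-r$ are nonnegative, since $\tau$ is then monotone in its set argument. I would state this as the standing assumption $o([u,v])\ge r([u,v])$ (onsite collaboration is at least remote collaboration), or else restrict to the monotone regime the lemmas implicitly use. Second, the identity $\alpha(Q\cup\{u\})|Q\cup\{u\}|=\alpha(Q)|Q|+\tau_{Q\cup\{u\}}(u)$ used in the lemmas treats the numerator as the total score, with $u$ contributing $o-r$ on each edge to $Q$. I would take this from the lemmas as given.

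The main obstacle is making ``any superset of the remaining vertices'' precise. A literal superset of $V^*\setminus\{u\}$ could contain vertices outside $V^*$, where $\tau_{V^*}(u)$ no longer bounds $u$'s gain. To handle this, I would restrict the statement to sets drawn from $V^*$, matching the lemmas' hypothesis $Q\subseteq V^*$. I would also note that Lemma~\ref{neighbor_prune1} is stated for $\overline{C}\subset C'$, so the full set $\overline{C}=C'$ should be covered by the same induction.
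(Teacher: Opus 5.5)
Your proposal is correct and is essentially the paper's own argument: the same split on $\alpha(Q)\ge c$ versus $\alpha(Q)<c$ for $Q$ drawn from $V^*$, with Lemma~\ref{neighbor_prune1} and Lemma~\ref{neighbor_prune2} doing the work (the paper takes any $Q\subseteq V^*\setminus\{u\}$ and adds back $u$, so your extra condition $Q\cap C'=\emptyset$ is not needed). Your case-2 conclusion, that all values stay strictly below $c=\alpha(V^*)$, is more careful than the paper's claim that reintroducing $u$ cannot increase $\alpha(Q)$, since Lemma~\ref{neighbor_prune2} only gives $\alpha(Q\cup\{u\})<c$; likewise, the assumption $o([u,v])\ge r([u,v])$ you flag is used silently in the paper's step $\tau_{Q\cup\{u\}}(u)\le\tau_{V^*}(u)$.
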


\opt{online}{
\begin{proof}
Let $V^*$ be the current solution with $\alpha(V^*) = c$ and $u \in C'$.
By Lemma~\ref{neighbor_prune1}, adding $u$ (or any subset of $C'$) to any subset $Q$ with $\alpha(Q) \ge c$ strictly decreases the average.
By Lemma~\ref{neighbor_prune2}, adding vertices from $C'$ to any $Q$ with $\alpha(Q) < c$ never raises the average to $c$ or above.
Thus, for any subset $Q \subseteq V^* \setminus \{u\}$, reintroducing $u$ cannot increase the average collaboration score of $Q$.
Hence $u$ can be safely pruned from $V^*$.
\end{proof}
}

\section{Experiments}
\label{sec:exp}

In this section, we evaluate the effectiveness and efficiency of the proposed algorithm \texttt{GRIA} and compare it with several baselines on multiple real-world networks. In the following, we first detail the experiment settings. Then, we show the experiment results.

\noindent\textbf{Datasets.}
We adopt four publicly available datasets: \emph{Manhattan}~\cite{MHA}, \emph{Virginia}~\cite{science2022}, \emph{ca-GrQc}~\cite{data_ca}, and \emph{ca-HepPh}~\cite{data_ca}. Table~\ref{tab:dataset} summarizes their statistics. For the datasets missing some information, following~\cite{teng2025breeding,hsu2024social,teng2022epidemic}, we augment the datasets with~\cite{teng2025breeding,teng2022epidemic,chen2024after,teng2024multi,hung2020efficient}.
%teng2021influence
% (i) \emph{Manhattan} is derived from a contact-tracing study in a rural US college town, containing 55,489 individuals and 216,683 interactions~\cite{MHA}.  
% (ii) \emph{Virginia} is constructed to assess medical costs during a large-scale epidemic in the state of Virginia, and includes 223,787 individuals and 918,523 interactions~\cite{science2022}.  
% (iii) \emph{ca-GrQc} represents coauthorship relations among authors who submitted papers to the General Relativity and Quantum Cosmology category on arXiv, with 5,242 authors and 14,496 coauthor edges~\cite{data_ca}.  
% (iv) \emph{ca-HepPh} is a coauthorship network in the High Energy Physics category on arXiv, with 12,008 authors and 118,521 edges~\cite{data_ca}.  
%Table~\ref{tab:dataset} summarizes the basic statistics of these datasets.

\begin{table}[t]
\centering
\small
\caption{Summary of datasets}
\label{tab:dataset}
%\resizebox{0.6\columnwidth}{!}{
\begin{tabular}{|c|c|c|c|c|}
\hline
Dataset        &  $|V|$ & $|E|$ & Ave. Deg. & Ave. Clustering Coefficient\\
\hline
Manhattan      &  55{,}489  & 216{,}683   & 7.80  & 0.32\\
Virginia       &  223{,}787 & 918{,}523   & 8.20  & 0.33\\
ca-GrQc        &  5{,}242   & 14{,}496    & 5.53  & 0.53 \\
ca-HepPh       &  12{,}008  & 118{,}521   & 19.74 & 0.61 \\
\hline
\end{tabular}
%}
\end{table}

\noindent\textbf{Baselines.}
We compare \texttt{GRIA} with six state-of-the-art baselines: \texttt{ST-Exa}~\cite{ST-Exa}, \texttt{cCoreExact}~\cite{cCoreExact}, \texttt{MaxGF}~\cite{shen2022}, \texttt{RF}~\cite{Lappa2009}, \texttt{RWR}~\cite{base_R}, and \texttt{EpRec}~\cite{base_E}.  
These baselines cover dense-subgraph/community search, team formation, and epidemic-aware recommendation, providing a diverse set of competitive methods.

\noindent\textbf{Metrics.}
Our evaluation is based on two metrics: (1) the \emph{objective value}, which is the average collaboration score $\alpha(V^*)$ of the selected onsite set, and (2) the \emph{computation time}, which is the running time required to compute the solution.

%\noindent\textbf{Experiment settings.}
%AS all datasets provide only a single-layer network, we construct the second layer by duplicating the original network and randomly perturbing its edges. Skill sets and edge influence probabilities are generated following previous studies~\cite{shen20222,zhou2015upper}. For each edge $e$, $o(e)$ is obtained via Jaccard similarity on neighbors and $r(e)$ is set to $r(e)=0.9\cdot o(e)$ following ~\cite{yang2022effects}. The $C$ is set to 30\% of employees.

\begin{figure}[htbp]
    \centering
	\subfigure[\emph{Manhattan}]{\label{MHA_skill_obj}
	    \includegraphics[width=0.35\columnwidth]{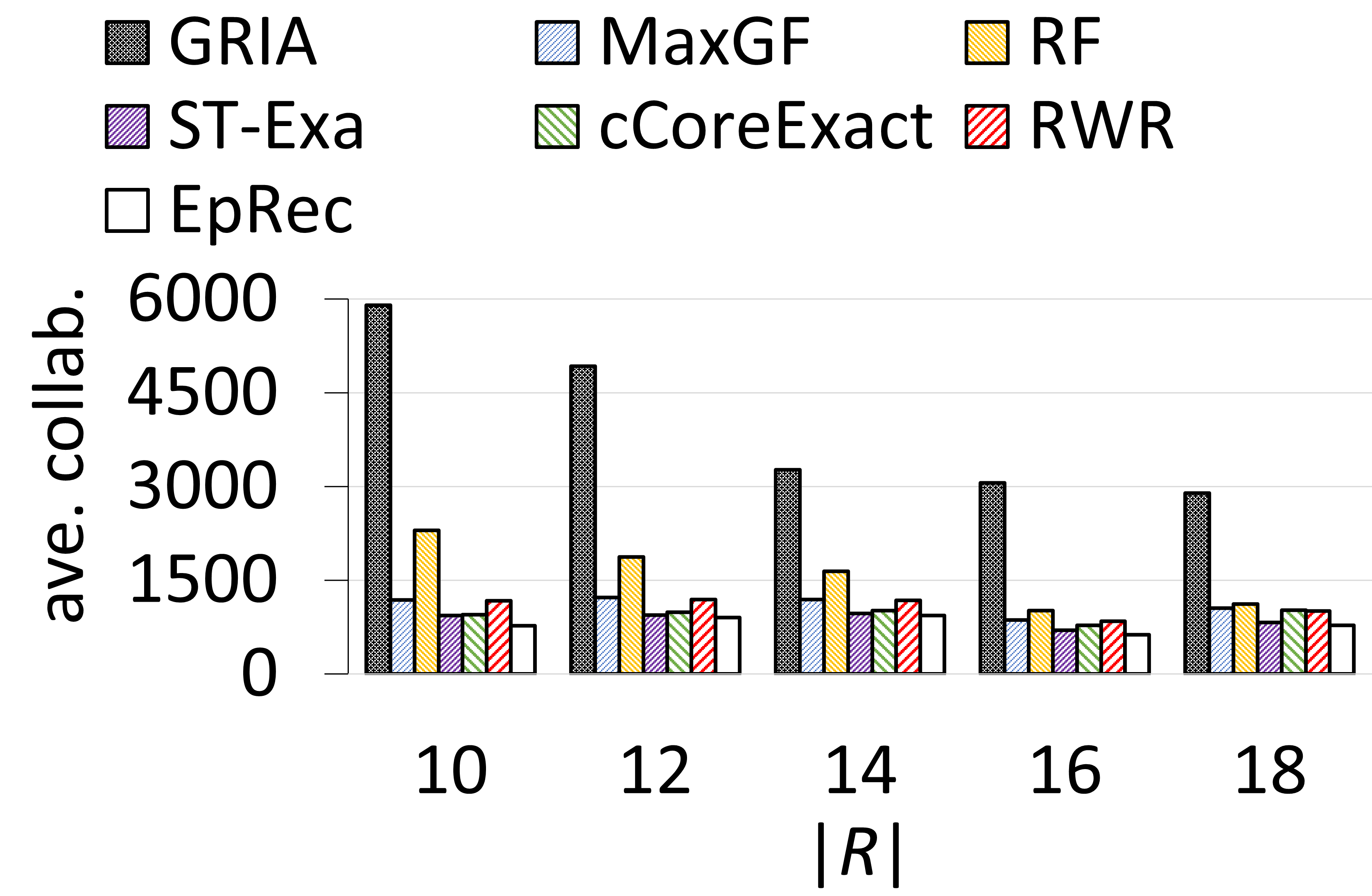}%
		}%
    \centering
	\subfigure[\emph{Virginia}]{\label{VA_skill_obj}
	    \includegraphics[width=0.35\columnwidth]{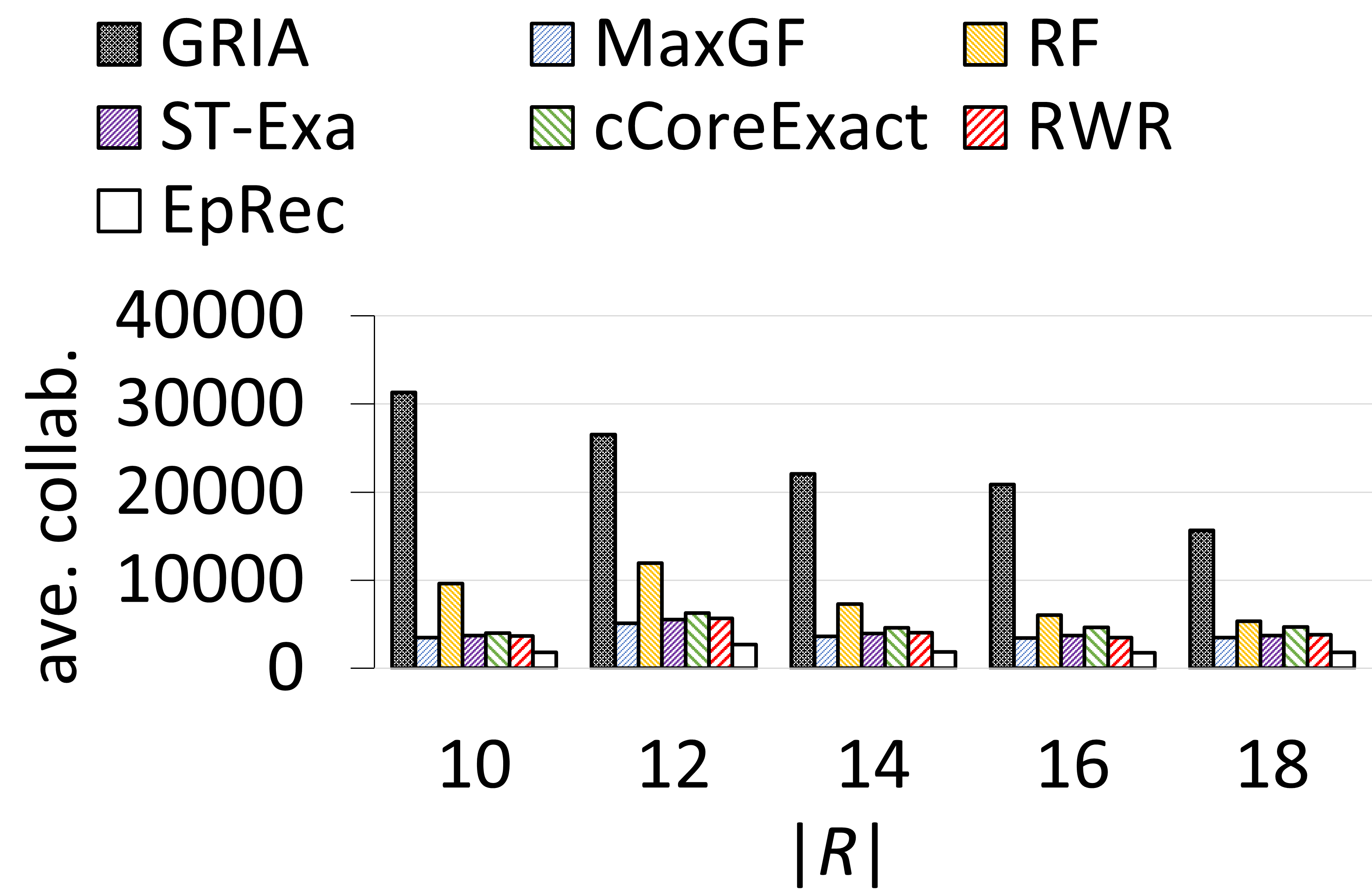}
		}
    \centering
	\subfigure[\emph{ca-GrQc}]{\label{GR_skill_obj}
	    \includegraphics[width=0.35\columnwidth]{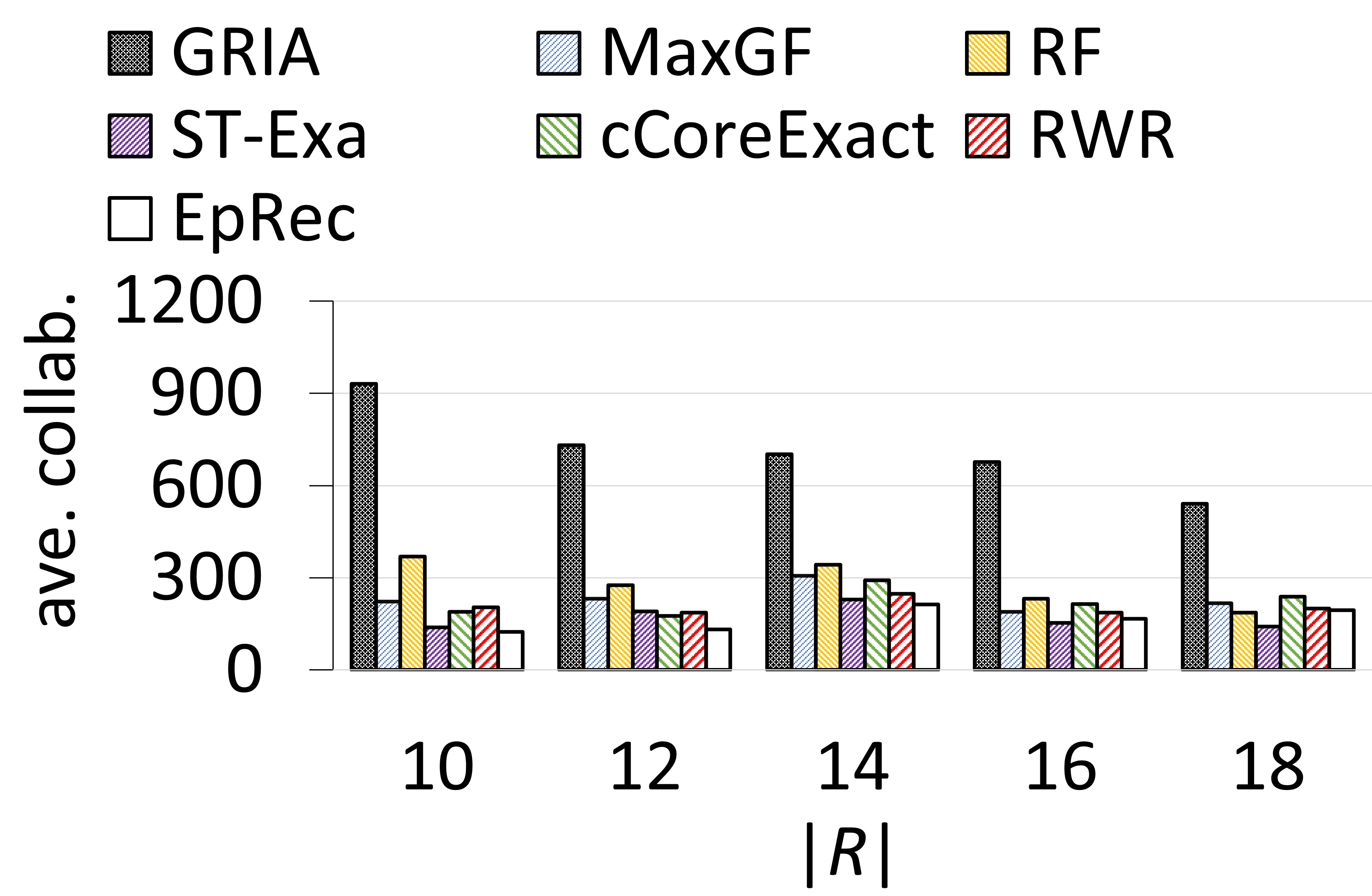}
		}%
    \centering
	\subfigure[\emph{ca-HepPh}]{\label{HE_skill_obj}
	    \includegraphics[width=0.35\columnwidth]{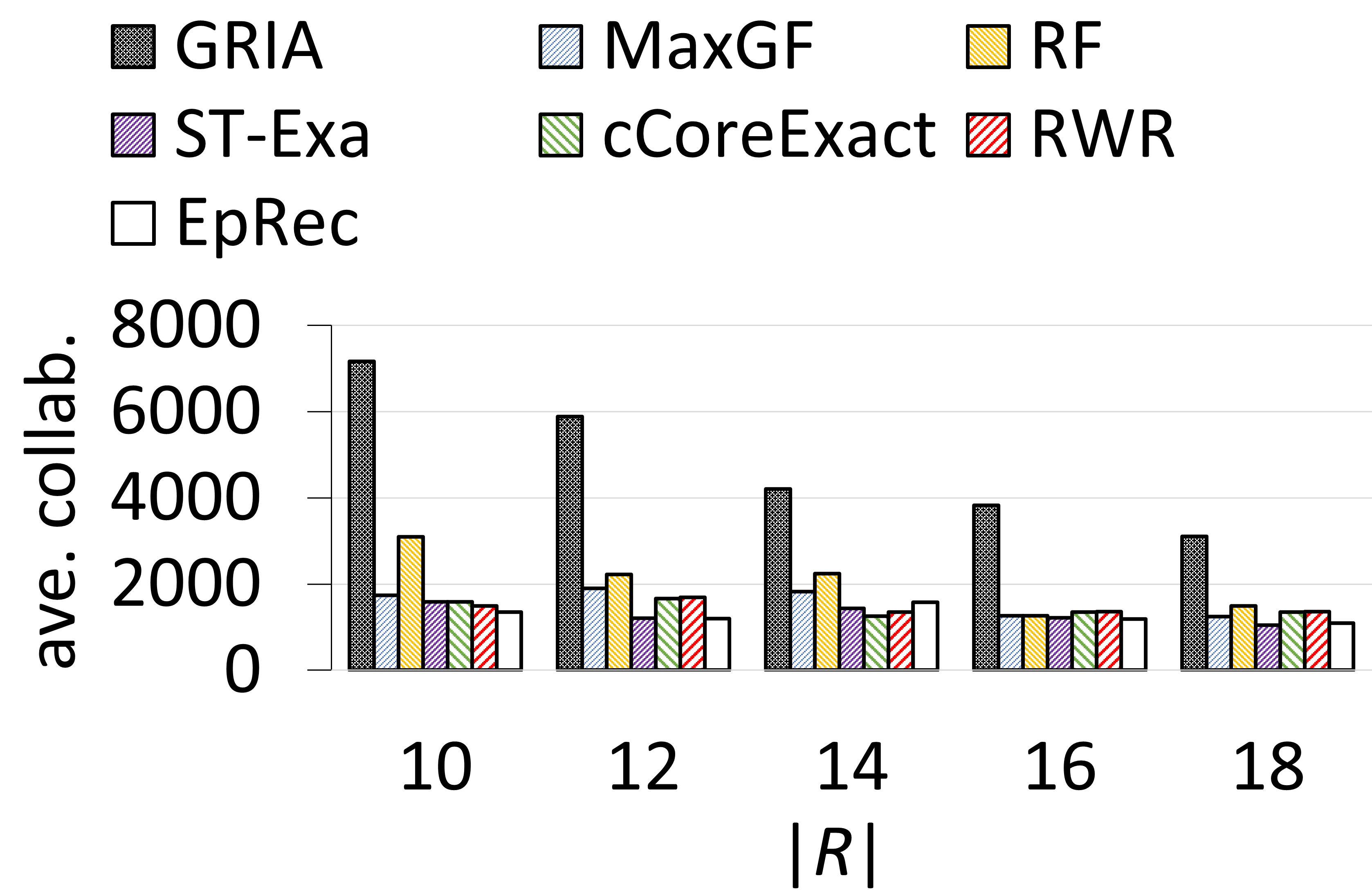}
		}%
    \caption{The average collaboration (ave. collab.) while varying $|R|$}
    \label{skill_obj}
\end{figure}

\begin{figure}[t]
    \centering
	\subfigure[\emph{Manhattan}]{\label{MHA_time}
	    \includegraphics[width=0.35\columnwidth]{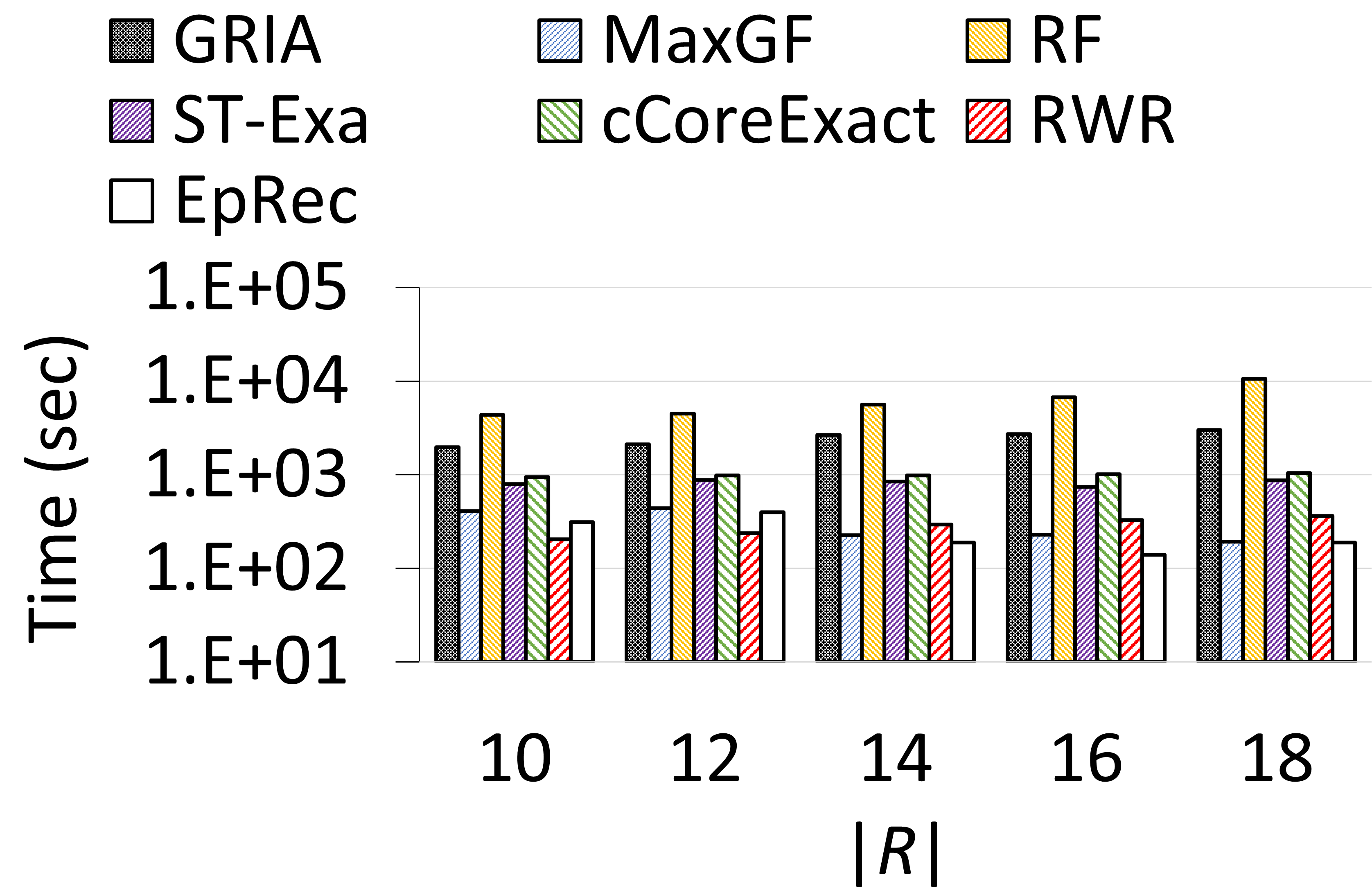}%
		}%
    \centering
	\subfigure[\emph{Virginia}]{\label{VA_time}
	    \includegraphics[width=0.35\columnwidth]{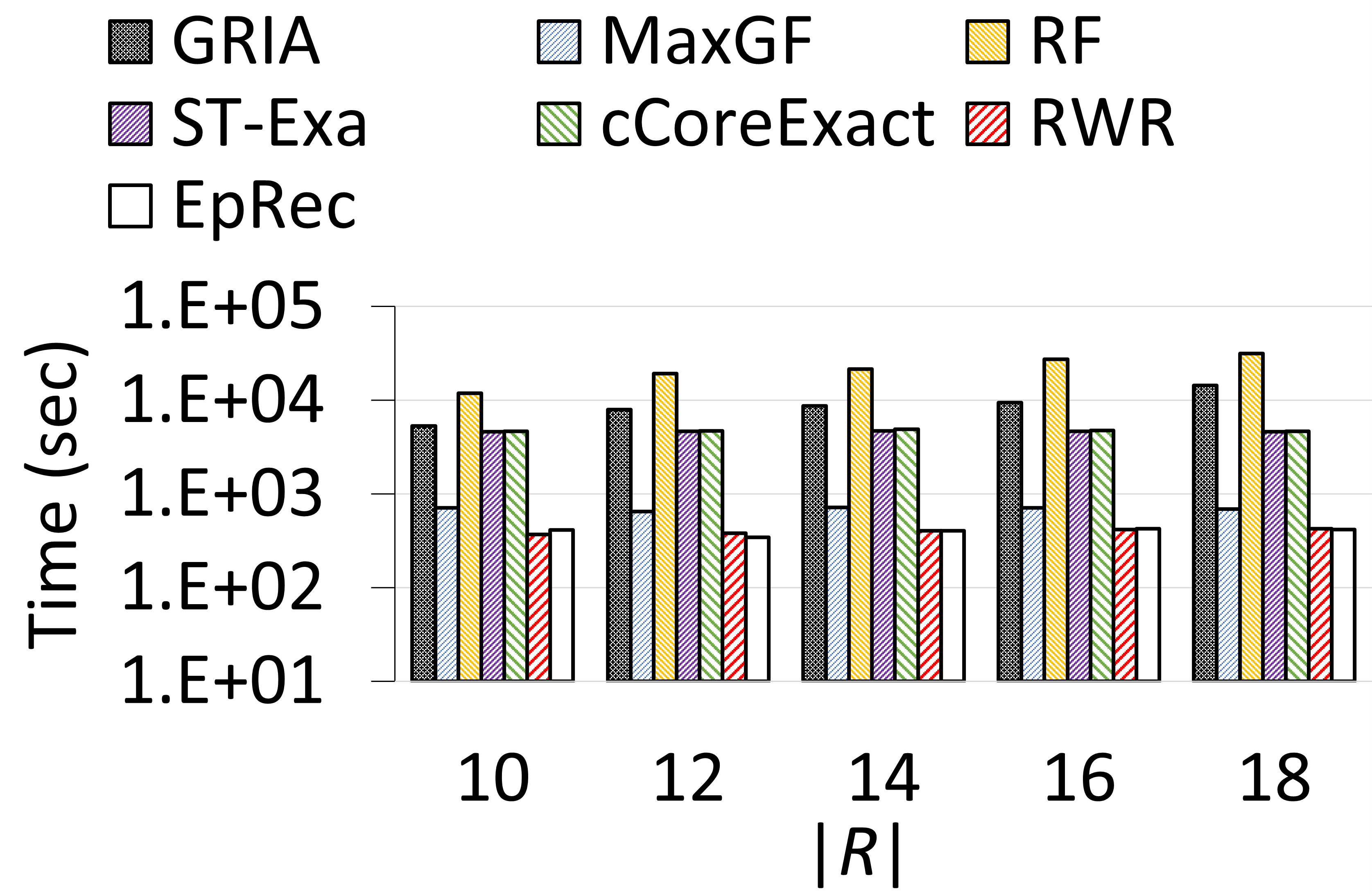}
		}
    \centering
	\subfigure[\emph{ca-GrQc}]{\label{GR_time}
	    \includegraphics[width=0.35\columnwidth]{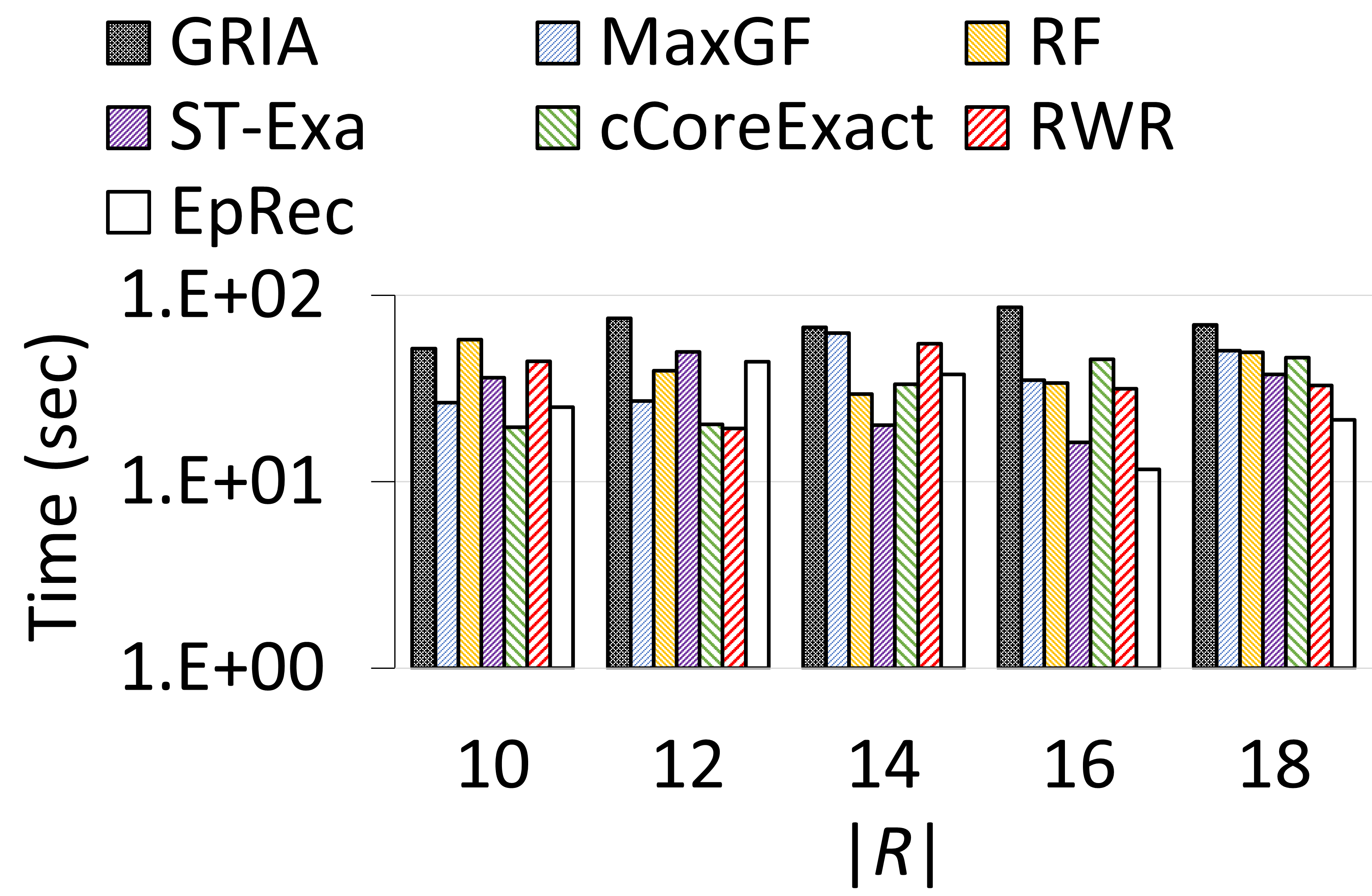}%
		}%
    \centering
	\subfigure[\emph{ca-HepPh}]{\label{HE_time}
	    \includegraphics[width=0.35\columnwidth]{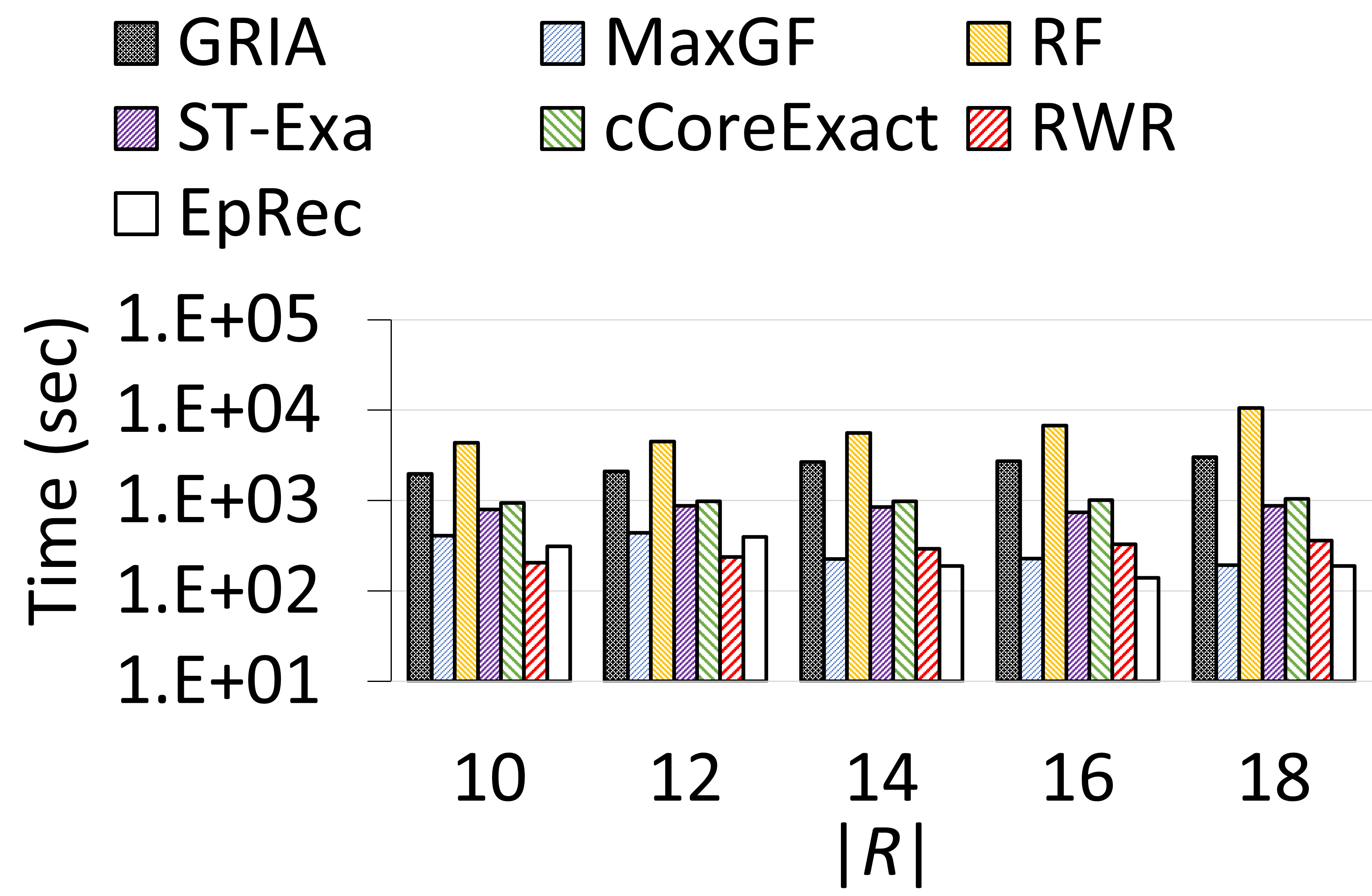}
		}
    \caption{The computation time while varying $|R|$}
    \label{skill_time}
\end{figure}

\noindent \textbf{Performance evaluation.}
Fig.~\ref{skill_obj} reports the objective ratio of \texttt{GRIA} and the baselines for different sizes of the required skill set $R$. 
Across all four datasets, \texttt{GRIA} consistently achieves the highest objective value and thus an objective ratio of $1$.  
Methods such as \texttt{MaxGF} and \texttt{ST-Exa} yield clearly lower ratios, while \texttt{RWR} typically performs better than these structure-only methods but still remains below \texttt{GRIA}.  
The trends in Fig.~\ref{MHA_skill_obj} are representative of those in Figures~\ref{VA_skill_obj}–\ref{HE_skill_obj}.  
As $|R|$ increases, the baselines generally improve because they tend to select larger onsite sets, which naturally increases the total collaboration score.  
Nevertheless, \texttt{GRIA} maintains a clear advantage by jointly accounting for skills, collaboration, and propagation risk.

Fig.~\ref{skill_time} compares the computation time of all algorithms as $|R|$ varies.  
\texttt{RWR} has the longest running time, followed by \texttt{GRIA}, while \texttt{ST-Exa} and \texttt{MaxGF} are considerably faster.  
Although \texttt{RWR} exhibits higher theoretical complexity, \texttt{MaxGF} attains the shortest runtime in practice, likely due to its use of neighbor pruning and core pruning to aggressively reduce the search space.  
\texttt{GRIA} is substantially faster than \texttt{RWR} but slower than the purely structural baselines, reflecting the additional cost of risk-aware construction and skill-preserving workforce refinement.  
%This trade-off between runtime and solution quality is consistent with the design goal of \texttt{GRIA}.

\begin{figure}[t]
    \centering
	\subfigure[\emph{ca-GrQc}]{\label{MHA_inf_val}
	    \includegraphics[width=0.35\columnwidth]{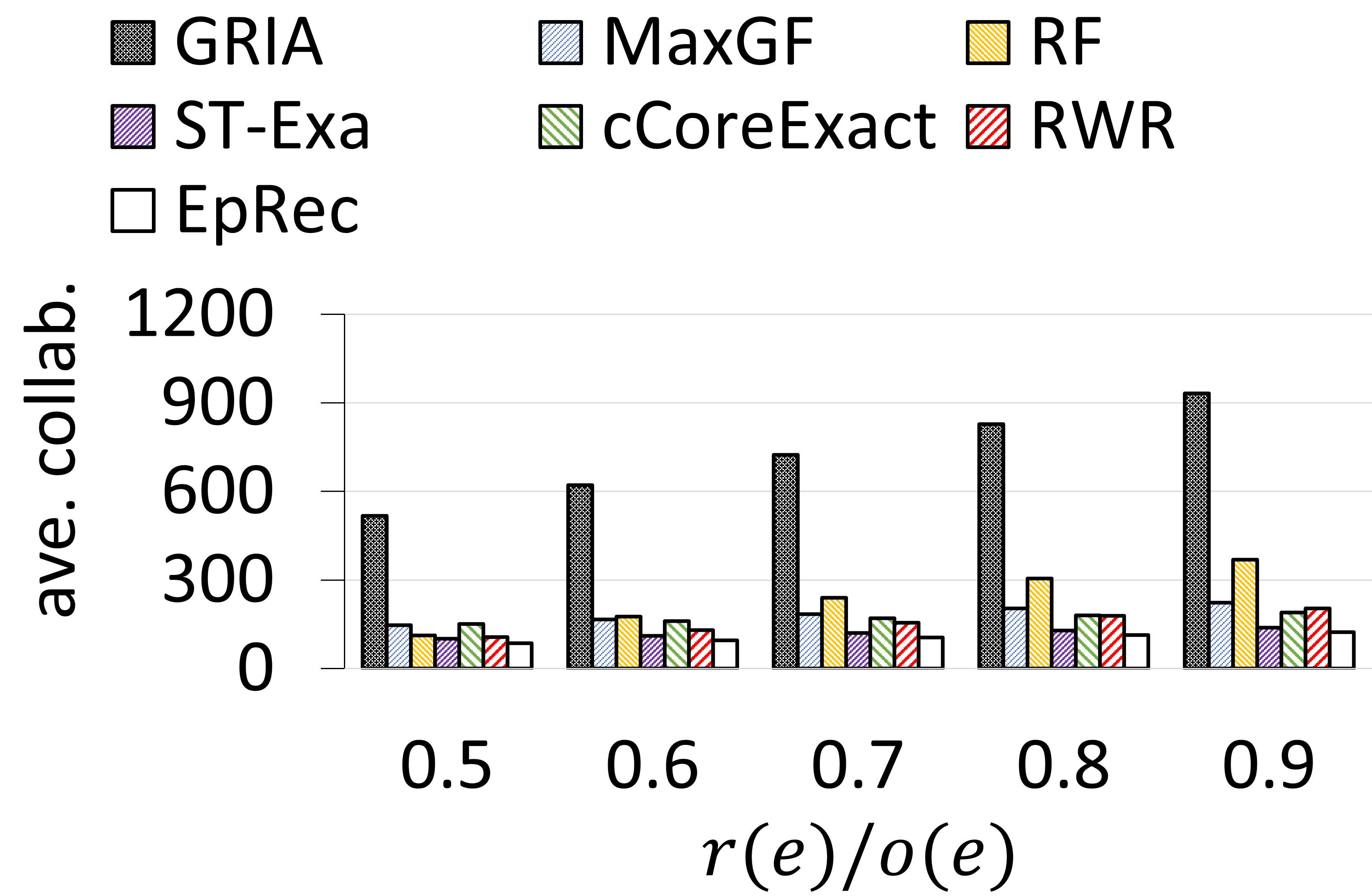}%
		}%
    \centering
	\subfigure[\emph{ca-HepPh}]{\label{VA_inf_val}
	    \includegraphics[width=0.35\columnwidth]{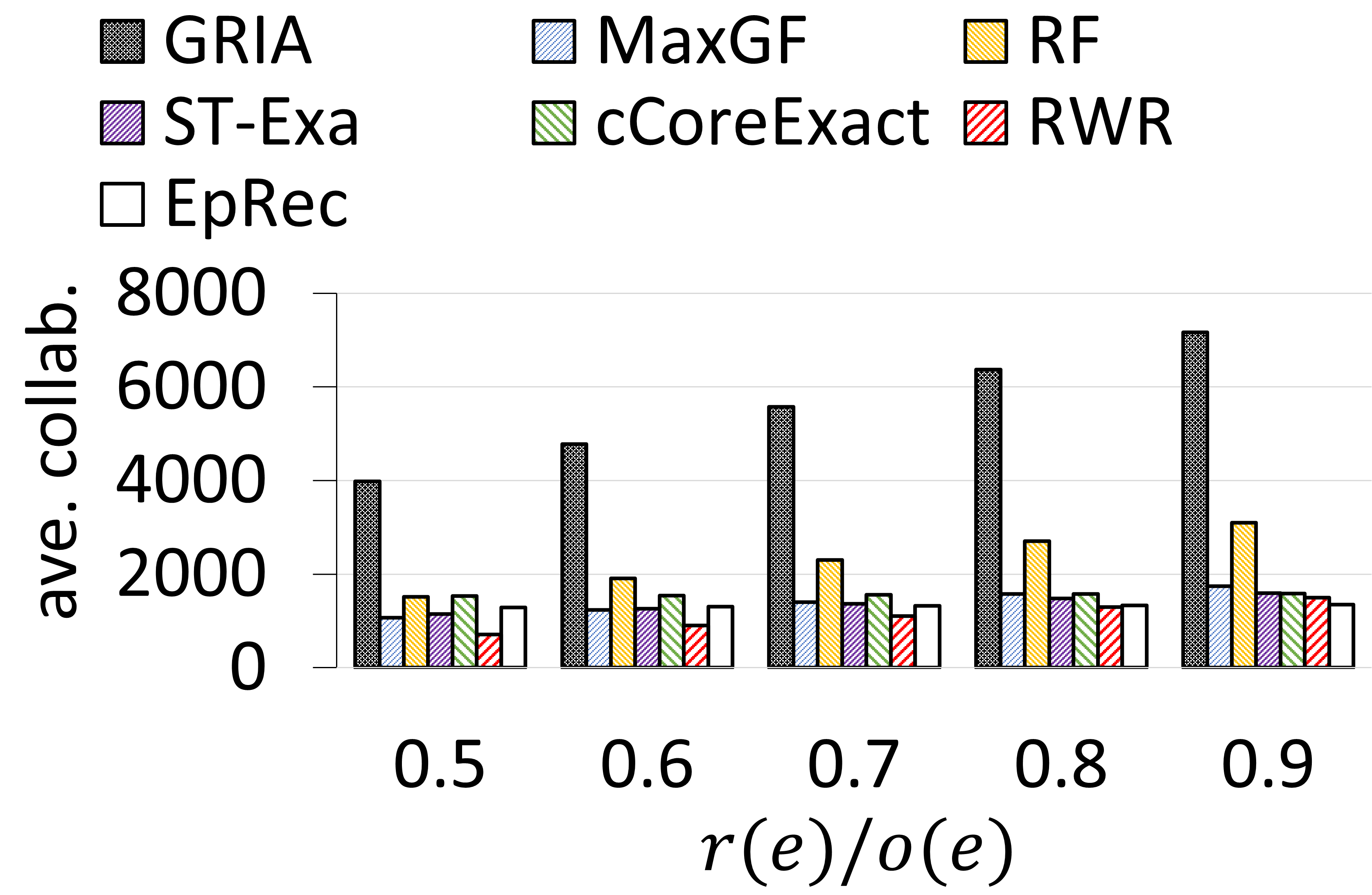}
		}
    \caption{The average collaboration while varying the remote/onsite effectiveness ratio}
    \label{fig:inf_val}
\end{figure}

\noindent\textbf{Impact of the remote/onsite effectiveness ratio.}
Fig.~\ref{fig:inf_val} evaluates the robustness of \texttt{GRIA} with respect to the relative effectiveness of remote versus onsite collaboration. We vary the ratio between the remote and onsite collaboration scores from $0.5$ to $0.9$ on \emph{ca-GrQc} and \emph{ca-HepPh}; the other two datasets, \emph{Manhattan} and \emph{Virginia}, exhibit similar trends and are therefore omitted for brevity. These ratios correspond to scenarios in which remote collaboration is between $50\%$ and $90\%$ as effective as onsite collaboration. Across all settings, \texttt{GRIA} consistently achieves higher objective values than all baselines, indicating that \texttt{GRIA} remains effective over a broad spectrum of hybrid-work configurations.

\section{Conclusion}
\label{sec:conclu}

In this paper, we formulate the RSHWC problem for hybrid workforce configuration on a two-layer social network with skill and epidemic-risk constraints, prove that RSHWC is NP-hard, and develop our \texttt{GRIA} algorithm that combines risk-aware workforce construction,  skill-preserving workforce refinement, and risk-reducing member replacement.  Experiments on four real-world datasets show that \texttt{GRIA} consistently achieves higher collaboration scores than state-of-the-art baselines under comparable risk budgets at moderate computational cost.

\linespread{0.95}
\bibliographystyle{splncs04}
\bibliography{reference}

\end{CJK}
\end{document}